%% file: quarter_bound.tex
\documentclass[11pt]{article}
\usepackage[T1]{fontenc}
\usepackage[utf8]{inputenc}
\usepackage{lmodern}
\usepackage[margin=1in]{geometry}
\usepackage{amsmath,amssymb,amsthm,mathtools}
\usepackage{microtype,booktabs,longtable,array,enumitem,fancyvrb,xurl}
\usepackage[hidelinks]{hyperref}
\hypersetup{pdftitle={A New Upper Bound on the Binary Deletion Channel Capacity},
  pdfauthor={Özgür Soysal},
  pdfsubject={Binary deletion channel capacity converse with a verified Lean certificate},
  pdfkeywords={binary deletion channel, capacity, Lean, finite certificate}}
\numberwithin{equation}{section}
\newtheorem{theorem}{Theorem}[section]
\newtheorem{lemma}[theorem]{Lemma}
\newtheorem{proposition}[theorem]{Proposition}
\newtheorem{corollary}[theorem]{Corollary}
\theoremstyle{definition}
\newtheorem{definition}[theorem]{Definition}
\newtheorem{remark}[theorem]{Remark}
\DeclareMathOperator{\Ent}{H}
\DeclareMathOperator{\Div}{D}
\DeclareMathOperator{\supp}{supp}
\newcommand{\E}{\mathbb E}
\newcommand{\Prob}{\mathbb P}
\newcommand{\R}{\mathbb R}
\newcommand{\N}{\mathbb N}
\newcommand{\U}{\mathcal U}
\newcommand{\Y}{\mathcal Y}
\newcommand{\ip}[2]{\langle #1,#2\rangle}

\newcommand{\one}{\mathbf 1}
\newcommand{\bits}{\{0,1\}}
\newcommand{\lean}[1]{\par\smallskip\noindent\textit{Formal correspondence:}
  {\small\path{#1}}.\par\smallskip}
\setlist[enumerate]{itemsep=3pt,topsep=5pt}

\title{A New Upper Bound on the Binary Deletion Channel Capacity}
\author{\"Ozg\"ur Soysal}
\date{September 11, 2026}

\begin{document}
\maketitle
\begin{center}
\begin{minipage}{0.9\textwidth}
\centering
\textbf{AI: disclosure}\par\smallskip
\small This work is an AI-assisted mathematics project. Claude Fable 5.1,
GPT 5.6 Sol, and GPT 6 Astra were used in its development.
The author is responsible for the manuscript and its claims.
\end{minipage}
\end{center}
\begin{abstract}
We prove that the capacity of the binary deletion channel satisfies
$C(d)\le (1-d)/4$ for every $13/20\le d<1$. The proof describes the
output from right to left, using a six-bit context to assign a description
length. We bound the increase in expected description length minus output
entropy when one input bit is added. A relative-entropy identity reduces
this bound to finitely many linear inequalities. A potential on input
windows of length 26 makes the inequalities telescope, giving a bound for
every input word. Deletion composition extends the result from $d=13/20$
to all larger deletion probabilities. We also obtain finite-block bounds
on mutual information and decoding error, with explicit
$O(\log n/n)$ corrections. The finite certificate is checked using exact
integer arithmetic, and the proof is formalized end to end in Lean.
\end{abstract}

\noindent\textbf{Keywords.} Binary deletion channel; capacity converse;
predictive distributions; finite certificates; formal verification; Lean.

\tableofcontents
\newpage

\section{Introduction and statement of results}\label{sec:intro}

The binary deletion channel independently removes each input bit with
probability $d$ and delivers the surviving subsequence in its original order.
The receiver observes neither the deleted bits nor their positions. The
resulting uncertainty about alignment makes even this elementary channel
difficult to analyze. Cheraghchi and Ribeiro~\cite{CR} survey the capacity
theory of synchronization channels and the methods developed for them.

The present paper gives a converse in the high-deletion regime by following
the probability law of a short prefix of the future output. This law retains
enough alignment information to control the increase of output entropy when
one input bit is prepended. A backward description cost uses the same law.
Their difference is a local reward. We dominate this reward by a finite
family of affine functionals and certify a potential inequality over every
input-window state. The finite certificate controls all remote suffix laws,
not just the reference distributions used to construct it.

The result is stated operationally, for uniform-message average-error codes
with arbitrary deterministic encoders and decoders. Section~\ref{sec:model}
fixes these definitions. Entropy and mutual information in displayed
information expressions are measured in nats; rates and capacity are in bits
per input bit.

\begin{theorem}[Quarter bound and finite-block converses]\label{thm:main}
For every $d\in[13/20,1)$ and every positive integer $n$:
\begin{enumerate}[label=(\roman*)]
\item Every law of $X\in\bits^n$, with channel output $Y$, satisfies
\begin{equation}\label{eq:main-info}
 \frac{I(X;Y)}{n\log 2}
 \le \frac{1-d}{4}+\frac{\log_2(n+1)+1}{n}.
\end{equation}
\item Every length-$n$ block code of rate $R_n$ and average error $P_e$
satisfies
\begin{equation}\label{eq:main-code}
 (1-P_e)R_n
 \le \frac{1-d}{4}+\frac{\log_2(n+1)+2}{n}.
\end{equation}
\item The operational capacity satisfies
\begin{equation}\label{eq:main-capacity}
 C(d)\le\frac{1-d}{4}.
\end{equation}
\item If $n_j\to\infty$ and length-$n_j$ codes have $P_{e,j}\to0$, then
for every $\varepsilon>0$, all sufficiently large $j$ satisfy
\begin{equation}\label{eq:main-eventual}
 R_{n_j}<\frac{1-d}{4}+\varepsilon.
\end{equation}
In particular, any limit of those rates is at most $(1-d)/4$.
\end{enumerate}
\end{theorem}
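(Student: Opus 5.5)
The plan is to prove (i) at the single deletion probability $d_0=13/20$ first, then transfer it to larger $d$, and finally derive (ii)--(iv) from (i) by standard arguments.

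\textbf{Step 1: the base case $d_0=13/20$.} For a fixed input word $x\in\bits^n$ we build a backward description of the output. Scanning right to left, each output symbol gets a description length from a predictive law conditioned on a six-bit context of the already-described future output. Let $L(x)$ be the expected description length of $Y$ given $X=x$. Kraft's inequality applied to the code on all outputs, summed over output lengths (at most $n+1$ of them), gives
\[
H(Y)\le \E L(X)+\log(n+1)+\log 2 .
\]
It therefore suffices to show
\[
\E L(X)-H(Y\mid X)\le n\tfrac{1-d_0}{4}\log 2 \quad\text{for every word } x .
\]
We obtain this by induction on the length of $x$, prepending one input bit at a time. Let $x'=bx$. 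The output of $x'$ is either the output of $x$ (if $b$ is deleted, probability $d_0$) or $b$ followed by it. The increment of $L-H(\cdot\mid x)$ is a local reward. A relative-entropy identity expresses this reward as a function of the law of the first six output bits of the suffix. Concavity then bounds the reward above by the minimum of finitely many affine functionals of that suffix law.

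\textbf{Step 2: making the local bound uniform.} The reward bound depends on the suffix law, which in turn depends on the whole suffix. The fix is a potential $\Phi$ on input windows of length $26$. We need the inequality
\[
\text{reward}(b,\text{window})+\Phi(\text{new window})-\Phi(\text{old window})\le \tfrac{1-d_0}{4}\log 2
\]
for every window state and every suffix law consistent with that window. Summing over the $n$ prepending steps telescopes, and the boundary terms are absorbed into the stated correction. Verifying this finite family of linear inequalities, exactly and with rational data over all $2^{26}$-type window states (modulo symmetries), is the main obstacle. I would first pick reference suffix distributions, solve an LP for $\Phi$ and the affine multipliers, and then round to rationals and check with integer arithmetic. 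The delicate point is that the certificate must hold for all remote suffix laws, not only for the reference ones. This requires the affine functionals to dominate the reward globally, which is exactly what the relative-entropy (Gibbs) inequality supplies.

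\textbf{Step 3: extending to $d>d_0$.} A deletion channel with probability $d$ is the composition of $\mathrm{BDC}(d_0)$ with $\mathrm{BDC}(d')$, where $(1-d)=(1-d_0)(1-d')$. Write $Z$ for the $\mathrm{BDC}(d')$ output of $X$ and $Y$ for the $\mathrm{BDC}(d_0)$ output of $Z$. Conditioned on $|Z|=m$, Step 1 applied to the law of $Z$ gives a bound of order $m\tfrac{1-d_0}{4}$ on $I(Z;Y\mid |Z|)$. The Markov chain $X\to Z\to Y$ then gives
\[
I(X;Y)\le I(X;Y,|Z|)\le I(Z;Y\mid |Z|)+I(X;|Z|).
\]
Using $\E|Z|=(1-d')n$, this yields $(1-d)n/4$ plus logarithmic terms. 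To keep the stated correction $\log_2(n+1)+1$ rather than a doubled one, I would instead run Step 1 directly for general $d$ via the composed predictive law. I expect this to be a bookkeeping issue rather than a conceptual one.

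\textbf{Step 4: (ii)--(iv).} For (ii), apply Fano's inequality to a uniform message $W$ with $X=f(W)$:
\[
nR_n\log 2\,(1-P_e)\le I(X;Y)+\log 2 .
\]
Combining with (i) gives \eqref{eq:main-code}. Letting $n\to\infty$ with $P_e\to0$ gives (iii) and (iv), since the correction is $O(\log n/n)$ and $(1-P_{e,j})^{-1}\to1$.
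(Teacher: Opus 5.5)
Your Steps 1, 2 and 4 follow the paper's route: backward lengths from a six-bit context of the future output, the partition bound over $\Y_n$, the one-bit score increment, a global tangent majorant, a potential on length-26 windows, and Fano. The real problem is in Step 3, and it lies in your diagnosis and proposed repair rather than in the route itself.

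You bound $I(X;|Z|)$ by $\log(n+1)$. But every input has length $n$, so $|Z|\sim\mathrm{Bin}(n,1-d')$ whatever $X$ is. Hence $I(X;|Z|)=0$, and your chain already gives
\[
 I(X;Y)\le\sum_{m=0}^{n}\Prob(|Z|=m)\bigl(m\theta+B+\log(m+1)\bigr)\le(1-d')n\theta+B+\log(n+1),
\]
with $\theta=\frac{7}{80}\log2=\frac{1-d_0}{4}\log2$ and $B=\log2$. Since $(1-d')\theta=\frac{1-d}{4}\log2$, this is exactly the bound in (i).

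The repair you propose would not work as stated. The certificate is a finite exact computation at the single value $d_0=13/20$, while $F_c$, $q_s$, $A_s$ and $g_c(q_s)$ all depend on $d$. Rerunning Step 1 at parameter $d$ would therefore need a new certificate for each of uncountably many $d$.

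The paper's Step 3 takes a slightly different route. It uses one test law $Q_n$ on $\Y_n$ for every intermediate length $k\le n$, bounds $\Div(W_{d_0}(\cdot\mid z)\Vert Q_n)\le k\theta+B+\log(n+1)$, and averages over $Z\sim W_\alpha(\cdot\mid x)$ by convexity of $\Div$ in its first argument. This gives a divergence bound for every input word against a fixed output law, with the same constants as your conditioning argument.

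Three smaller repairs are needed.

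First, the partition bound gives $\Ent(Y)\le(\log2)\E L(Y)+\log(n+1)$, with no extra $\log2$. The one bit in (i) comes from the potential: telescoping yields $S_{d_0}(x)\le n\theta+\Phi(0^{26})-\Phi(s(x))$, not $n\theta$. So you need the range of $\Phi$ to be at most $\log2$ (the paper has $0\le h<\log2$). Counting both bits would give $\log_2(n+1)+2$.

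Second, the score increment is not a function of the six-bit law. Prepending is injective on each retention branch, so $\Ent(Z\mid B)=\Ent(Y)$ and the entropy increment equals $I(B;Z)$. The context law enters only through the data-processing inequality $I(B;Z)\ge I(B;u(Z))=J_d(p_x,(\sigma_c)_{\#}p_x)$, which gives the upper bound $r_c(p_x)$. The relative-entropy identity is the slack identity of the tangent majorant, not this step.

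Third, the set of suffix laws consistent with a window is infinite. It becomes a finite check because $p_x=F_s\mu$ for some law $\mu$. Using a single tangent $g_c(q_s)$ per window makes the bound $\ip{\mu}{A_sg_c(q_s)}$ linear in $\mu$, so only the 64 vertices $\mu=\delta_i$ need testing. A pointwise minimum of several affine majorants would not reduce to vertex checks.
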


The proof has two parts. First, Theorem~\ref{thm:generic} shows how a
finite collection of inequalities bounds the capacity. Then
Proposition~\ref{prop:concrete} supplies an explicit certificate satisfying
those inequalities. The argument leading from the certificate to the
capacity bound is given in full below.

\begin{corollary}\label{cor:limsup}
The high-deletion normalized capacity obeys
\[
 \limsup_{d\uparrow1}\frac{C(d)}{1-d}\le\frac14.
\]
\end{corollary}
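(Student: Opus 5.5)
The corollary follows directly from part (iii) of Theorem~\ref{thm:main}, so no new channel analysis is needed. The plan is to divide the capacity bound by $1-d$ and pass to the limit superior.

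First, for every $d\in[13/20,1)$ we have $1-d>0$. Dividing \eqref{eq:main-capacity} by $1-d$ therefore gives
\[
 \frac{C(d)}{1-d}\le\frac14\qquad\text{for all } d\in[13/20,1).
\]

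Second, the limit superior as $d\uparrow1$ depends only on values of $d$ in an arbitrarily small left neighbourhood of $1$. Every such neighbourhood intersects $[13/20,1)$ in a set of the form $(1-\delta,1)$. We can therefore write
\[
 \limsup_{d\uparrow1}\frac{C(d)}{1-d}=\lim_{\delta\downarrow0}\ \sup_{d\in(1-\delta,1)}\frac{C(d)}{1-d}.
\]
For every $\delta\le 7/20$, the inner supremum is at most $1/4$ by the first step. Hence the limit is at most $1/4$.

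Third, we check the conventions. $C(d)$ is a capacity, so it is nonnegative, and the quotient is well defined for $d<1$. No finiteness issue arises because the displayed bound is uniform in $d$ on $[13/20,1)$. There is no real obstacle here: all the content lies in Theorem~\ref{thm:main}(iii), which is established through Theorem~\ref{thm:generic} and Proposition~\ref{prop:concrete}. The corollary's only role is to record that the constant $1/4$ is uniform on the whole interval $[13/20,1)$, which is what the normalized limit requires.
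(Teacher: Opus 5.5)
Your proposal is correct and matches the paper's proof: divide \eqref{eq:main-capacity} by the positive number $1-d$ for $d\in[13/20,1)$, then take the upper limit as $d\uparrow1$. Writing the limit superior as a limit of suprema over $(1-\delta,1)$, and noting that $C(d)\ge0$, only spells out what the paper leaves implicit.
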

\begin{proof}
For $13/20\le d<1$, divide \eqref{eq:main-capacity} by the positive
number $1-d$ and take the upper limit.
\end{proof}
This corollary does not require an existence theorem for the normalized
limit. Dalai~\cite{Dalai} proved that the limit exists and equals the
infimum of the normalized capacity; invoking that external theorem gives
the same upper bound on the limit itself.

\subsection{Relation to previous bounds}
Fertonani and Duman~\cite{FD} introduced the genie-aided auxiliary-channel
framework underlying the later upper bounds of Rubinstein and
Con~\cite{RC} and Pinto and Ribeiro~\cite{PR}. By providing side
information about the deletion process, their framework bounds the
deletion-channel capacity through the capacities of finite auxiliary
channels, which can be computed using the Blahut--Arimoto
algorithm~\cite{Blahut,Arimoto}. Building on this framework, Rubinstein
and Con~\cite{RC} obtained the upper bound $0.3745(1-d)$ for $d\ge0.68$
by reducing the memory requirements of these computations. Pinto and
Ribeiro~\cite{PR} further developed the same approach with a parallelized
implementation, obtaining $0.3578(1-d)$ for $d\ge0.64$.
The quarter bound improves the Rubinstein--Con coefficient for
$d\ge0.68$ and the Pinto--Ribeiro coefficient for $d\ge0.65$. The present construction uses an input-window state
space and a fixed predictive output context, rather than the entire
transition matrix of a finite-block subsequence channel.

Auxiliary output distributions are standard tools in divergence-based
capacity converses; see, for example, the synchronization-channel
applications surveyed in~\cite[Sec.~4]{CR}. Here the specific backward
output law and the global finite certificate give a bound uniform over
all input words and all remote suffix laws.

\subsection{Outline of the proof}\label{subsec:scope}
The main quantity is expected description length minus output entropy.
Section~\ref{sec:testlaw} explains why an upper bound on this quantity gives
a capacity converse. Sections~\ref{sec:reward} and~\ref{sec:tangent} bound
its increase when a bit is prepended. Section~\ref{sec:window} reduces the
problem to a finite inequality and shows how a potential makes the local
bounds telescope. Section~\ref{sec:converse} then derives the information,
decoding, and capacity bounds.

The remaining step is an explicit computer-assisted calculation.
Sections~\ref{sec:integer} and~\ref{sec:coverage} specify the integer
inequalities and the certificate that satisfies them. Its potential has
$2^{26}$ entries and is supplied electronically. The complete proof,
including this finite calculation, is formalized in Lean; the appendices
give the implementation details and verification record.

\section{Channel model, finite laws, and operational capacity}\label{sec:model}

\subsection{Words and the channel kernel}
Let $\bits^0=\{\varnothing\}$ and
$\Y_n=\bigsqcup_{k=0}^n\bits^k$; the disjoint union records the word length.
For $x=x_1\cdots x_n$ and $e\in\bits^n$, let $x_e$ be the subsequence
of coordinates with $e_i=1$, and let $|e|=\sum_i e_i$. Define
\begin{equation}\label{eq:channel}
 W_d(y\mid x)=\sum_{e\in\bits^n:\,x_e=y}
                 (1-d)^{|e|}d^{n-|e|},\qquad y\in\Y_n.
\end{equation}
As usual, exponent zero contributes the factor one, including at an
endpoint of $[0,1]$. All summands are nonnegative and
$\sum_yW_d(y\mid x)=(d+(1-d))^n=1$. Distinct masks yielding the same
word are summed, not identified with a single mask.

For a finite set $A$, write $\mathcal P(A)$ for its probability simplex.
For $P\in\mathcal P(A)$ and a map $f:A\to B$, the pushforward is
$(f_{\#} P)(b)=\sum_{a:f(a)=b}P(a)$. We use
\[
 \Ent(P)=-\sum_{a:P(a)>0}P(a)\log P(a),\qquad
 \Div(P\Vert Q)=\sum_{a:P(a)>0}P(a)\log\frac{P(a)}{Q(a)}
\]
when $\supp P\subseteq\supp Q$. No expression $\log0$ is needed in this
convention. All uses of divergence below either have a strictly positive
second law or explicitly verify the support inclusion.

If $X$ has law $P$ and channel $W$, its output law is $P_Y=\sum_xP(x)W_x$,
and
\[
 I(X;Y)=\sum_xP(x)\Div(W_x\Vert P_Y)
       =\Ent(P_Y)-\sum_xP(x)\Ent(W_x).
\]
Terms with $P(x)=0$ can be omitted, so the divergence support condition
holds wherever it is used.

\subsection{Codes and achievable rates}
A block code of input length $n$ is a positive integer $M$, an encoder
$f:\{1,\ldots,M\}\to\bits^n$, and a decoder
$g:\Y_n\to\{1,\ldots,M\}$. The encoder need not be injective.
For a uniform message $U$, set $X=f(U)$ and
\[
 P_e=\Prob\{g(Y)\ne U\},\qquad R_n=\frac{\log M}{n\log2}\quad(n>0).
\]
A nonnegative rate $R$ is achievable if there is such a code at every
input length, its error tends to zero, and its rate tends to $R$.
We define $C(d)$ as the supremum of these rates. The one-message code
shows that the achievable set contains zero. The upper bound proved
below shows that this set is bounded above.

Our finite-block statements concern positive lengths. The conclusion
\eqref{eq:main-eventual} also applies to codes defined along any sequence
of lengths tending to infinity, so the converse does not depend on
requiring a code at every length.

\section{Finite entropy inequalities used in the proof}\label{sec:entropy}

The chain rule, log-sum inequality, convexity of relative entropy, and
data-processing inequality are classical; see Cover and
Thomas~\cite[Ch.~2]{CT}. We include their finite-law proofs to make the
support conditions explicit.

For finite random variables $A,B$, define
\[
 \Ent(A\mid B)=\sum_{b:\,\Prob(B=b)>0}\Prob(B=b)
                   \Ent(\mathcal L(A\mid B=b)).
\]
Conditional laws on zero-probability events do not enter this sum.
On every positive joint-mass coordinate,
$\Prob(A=a,B=b)=\Prob(B=b)\Prob(A=a\mid B=b)$.
Taking logarithms, multiplying by joint mass, and summing gives the
chain rule $\Ent(A,B)=\Ent(B)+\Ent(A\mid B)$.
Subtracting the two chain-rule forms gives
$I(A;B)=\Ent(A)-\Ent(A\mid B)=\Ent(B)-\Ent(B\mid A)$.
The same calculation conditional on a third variable gives the
conditional chain rule used in the decoding argument.

\begin{lemma}[Log-sum inequality with support conditions]\label{lem:logsum}
Suppose $a_i,b_i\ge0$ on a finite index set, and $b_i=0$ implies $a_i=0$.
Put $A=\sum_i a_i$ and $B=\sum_i b_i$. If $A>0$, then $B>0$ and
\begin{equation}\label{eq:logsum}
 A\log(A/B)\le\sum_{i:a_i>0}a_i\log(a_i/b_i).
\end{equation}
If $A=0$, both sides are interpreted as zero and the inequality remains true.
\end{lemma}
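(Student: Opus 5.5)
The plan is to reduce \eqref{eq:logsum} to Jensen's inequality for the concave function $\log$, or equivalently to the elementary bound $\log t\le t-1$, applied to the normalized weights $a_i/A$. First we dispose of the support bookkeeping. If $A>0$, some $a_j>0$. The hypothesis ($b_j=0\Rightarrow a_j=0$) then forces $b_j>0$, so $B\ge b_j>0$. Consequently every index in $S=\{i:a_i>0\}$ has $b_i>0$, and each term $a_i\log(a_i/b_i)$ on the right is a well-defined real number. Indices with $a_i=0$ contribute nothing to either $A$ or the right-hand sum. If $A=0$, every $a_i=0$ and both sides are zero by the stated convention, so there is nothing to prove.

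For $A>0$, put $p_i=a_i/A$ for $i\in S$, which is a probability vector on $S$, and $q_i=b_i/B$, so that $\sum_{i\in S}q_i\le 1$. The identity
\[
 \sum_{i\in S}a_i\log\frac{a_i}{b_i}-A\log\frac AB
 = A\sum_{i\in S}p_i\log\frac{p_i}{q_i}
\]
holds because $\log(a_i/b_i)=\log(p_i/q_i)+\log(A/B)$ and $\sum_{i\in S}a_i=A$. It therefore suffices to show $\sum_{i\in S}p_i\log(q_i/p_i)\le 0$. Apply $\log t\le t-1$ with $t=q_i/p_i>0$:
\[
 \sum_{i\in S}p_i\log\frac{q_i}{p_i}\le\sum_{i\in S}(q_i-p_i)=\sum_{i\in S}q_i-1\le 0 .
\]
The last step uses $\sum_{i\in S}b_i\le B$, which holds because all $b_i\ge0$. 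Multiplying by $A>0$ gives \eqref{eq:logsum}.

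The analysis is routine. The only point needing care, and the reason the lemma is stated with explicit support conditions, is to keep the sum restricted to $S$ throughout. In particular, we must not silently use $\sum_S q_i=1$: mass of $b$ outside $S$ only helps, and that is exactly the inequality $\sum_S q_i\le1$ used above. We must also verify $b_i>0$ on $S$ before taking any logarithm, so that no $\log 0$ or division by zero appears, in keeping with the paper's convention.
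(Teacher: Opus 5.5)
Your proposal is correct and follows essentially the same route as the paper: normalize to $p_i=a_i/A$ and $q_i=b_i/B$, apply $\log t\le t-1$ at $t=q_i/p_i$ on the indices with $a_i>0$, use $\sum_{a_i>0}q_i\le1$, and multiply back by $A$. Your explicit check that $b_i>0$ whenever $a_i>0$ is the same support bookkeeping the paper does more briefly.
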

\begin{proof}
If $A=0$, every $a_i$ vanishes. Otherwise $B>0$ by the support condition.
Set $p_i=a_i/A$ and $q_i=b_i/B$. For $p_i>0$, the elementary inequality
$\log t\le t-1$, applied to $t=q_i/p_i$, gives
$p_i\log(p_i/q_i)\ge p_i-q_i$. Summing over $p_i>0$ gives a lower bound
$1-\sum_{p_i>0}q_i\ge0$. Multiplication by $A$ and expansion of the
logarithm give \eqref{eq:logsum}.
\end{proof}

\begin{lemma}[Divergence contraction and mixing]\label{lem:kl}
Under the required support inclusions:
\begin{enumerate}[label=(\roman*)]
\item $\Div(P\Vert Q)\ge0$;
\item $\Div(f_{\#} P\Vert f_{\#} Q)\le\Div(P\Vert Q)$ for every map $f$;
\item for probability weights $\lambda_j$,
\[
 \Div\left(\sum_j\lambda_jP_j\middle\Vert\sum_j\lambda_jQ_j\right)
 \le\sum_j\lambda_j\Div(P_j\Vert Q_j).
\]
\end{enumerate}
\end{lemma}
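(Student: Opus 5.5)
All three parts follow from Lemma~\ref{lem:logsum}, applied with suitable groupings of the indices. Throughout, ``the required support inclusions'' means $\supp P\subseteq\supp Q$ in (i) and (ii). In (iii) we require $\supp P_j\subseteq\supp Q_j$ for each $j$ with $\lambda_j>0$; terms with $\lambda_j=0$ contribute nothing on either side and are dropped at the start.

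For (i), apply Lemma~\ref{lem:logsum} with $a_a=P(a)$ and $b_a=Q(a)$ over the finite set $A$. The support hypothesis is exactly $\supp P\subseteq\supp Q$. Since $P$ and $Q$ are probability laws, $A=B=1$, so the left side of \eqref{eq:logsum} is $0$ and the right side is $\Div(P\Vert Q)$.

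For (ii), first check the support inclusion for the pushforwards. If $(f_{\#}Q)(b)=0$, then $Q(a)=0$ for every $a\in f^{-1}(b)$. Hence $P(a)=0$ on that fiber, and so $(f_{\#}P)(b)=0$. Next, for each $b$ with $(f_{\#}P)(b)>0$, apply Lemma~\ref{lem:logsum} on the fiber $f^{-1}(b)$ with $a_a=P(a)$ and $b_a=Q(a)$. This gives
\[
 (f_{\#}P)(b)\log\frac{(f_{\#}P)(b)}{(f_{\#}Q)(b)}\le\sum_{a\in f^{-1}(b),\,P(a)>0}P(a)\log\frac{P(a)}{Q(a)}.
\]
Summing over such $b$ yields (ii). Fibers with $(f_{\#}P)(b)=0$ contain no $a$ with $P(a)>0$, so they contribute nothing to either side, and the fibers partition $\supp P$.

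For (iii), put $\bar P=\sum_j\lambda_jP_j$ and $\bar Q=\sum_j\lambda_jQ_j$. Support: if $\bar Q(a)=0$, then $\lambda_jQ_j(a)=0$ for all $j$. For each $j$ with $\lambda_j>0$ this forces $Q_j(a)=0$, hence $P_j(a)=0$, and therefore $\bar P(a)=0$. Now fix a point $a$ with $\bar P(a)>0$. Apply Lemma~\ref{lem:logsum} over the index $j$ with $a_j=\lambda_jP_j(a)$ and $b_j=\lambda_jQ_j(a)$; the support condition holds by the inclusion just shown, termwise. This gives
\[
 \bar P(a)\log\frac{\bar P(a)}{\bar Q(a)}\le\sum_{j:\lambda_jP_j(a)>0}\lambda_jP_j(a)\log\frac{P_j(a)}{Q_j(a)},
\]
after cancelling $\lambda_j$ inside the logarithm. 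Summing over $a$ with $\bar P(a)>0$ and exchanging the finite sums gives $\sum_j\lambda_j\Div(P_j\Vert Q_j)$. Every pair $(a,j)$ with $\lambda_jP_j(a)>0$ has $\bar P(a)>0$, so no terms are lost in the exchange.

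The only delicate step is this support and index bookkeeping, in particular discarding zero weights. The inequalities themselves are immediate from Lemma~\ref{lem:logsum}.
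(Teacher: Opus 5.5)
Your proposal is correct and is essentially the paper's proof: (i) is Lemma~\ref{lem:logsum} with both totals equal to one, (ii) applies it on each fiber $f^{-1}(b)$ and sums over $b$, and (iii) applies it at each point $a$ to $a_j=\lambda_jP_j(a)$, $b_j=\lambda_jQ_j(a)$ and sums over $a$. The only cosmetic difference is that you discard zero weights at the outset, whereas the paper lets them enter as pairs with $a_j=b_j=0$, which satisfy the support hypothesis of Lemma~\ref{lem:logsum} trivially; otherwise your support bookkeeping simply spells out what the paper's terse proof leaves implicit.
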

\begin{proof}
Part (i) is Lemma~\ref{lem:logsum} with totals one. For (ii), apply that
lemma on each fiber $f^{-1}(b)$ and sum over $b$. For (iii), at each
coordinate apply it to $a_j=\lambda_jP_j(a)$ and
$b_j=\lambda_jQ_j(a)$, then sum over $a$. A zero mixture weight contributes
zero and causes no exceptional case.
\end{proof}

In particular, a law on an alphabet of size $k\ge1$ has entropy at
most $\log k$: its divergence from the uniform law is
$\log k-\Ent(P)\ge0$. Applying this to each conditional law also
bounds conditional entropy by $\log k$. If a random variable is
determined by the variables being conditioned on, that conditional
entropy is zero because every contributing law is a point mass.

\begin{lemma}[Information and deterministic processing]\label{lem:mi}
For finite random variables $B,Z$ and a function $v$,
$I(B;v(Z))\le I(B;Z)$. Also, if $Q$ is a positive law on the output
alphabet of a finite channel, then
\begin{equation}\label{eq:dualidentity}
 \sum_xP(x)\Div(W_x\Vert Q)=I(X;Y)+\Div(P_Y\Vert Q).
\end{equation}
\end{lemma}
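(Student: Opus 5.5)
The proof has two parts, and both rest on Lemma~\ref{lem:kl} and the fiber-wise log-sum argument.

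\textbf{Data processing.} For the first claim I would write $I(B;Z)=\sum_b \Prob(B=b)\,\Div(P_{Z\mid B=b}\Vert P_Z)$, restricting to $b$ with positive mass. The support inclusion holds because $P_Z(z)\ge \Prob(B=b)P_{Z\mid B=b}(z)$. Pushing both laws forward under $v$ gives $P_{v(Z)\mid B=b}$ and $P_{v(Z)}$, since pushforward commutes with conditioning on $B$. Lemma~\ref{lem:kl}(ii) then shows that each summand can only decrease. Summing with the weights $\Prob(B=b)$ yields $I(B;v(Z))\le I(B;Z)$.

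\textbf{Identity \eqref{eq:dualidentity}.} Fix $x$ with $P(x)>0$. Because $Q$ is strictly positive, $\Div(W_x\Vert Q)$ is defined. Whenever $W_x(y)>0$ we also have $P_Y(y)\ge P(x)W_x(y)>0$, so we may split
\[
\log\frac{W_x(y)}{Q(y)}=\log\frac{W_x(y)}{P_Y(y)}+\log\frac{P_Y(y)}{Q(y)}
\]
on the support of $W_x$. Multiply by $W_x(y)$, sum over that support, and then average with weights $P(x)$.

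The first term gives $\sum_xP(x)\Div(W_x\Vert P_Y)=I(X;Y)$, by the definition in Section~\ref{sec:model}. For the second term, I would exchange the order of summation, which is legitimate because all sums are finite. This gives
\[
\sum_y\Bigl(\sum_{x:\,W_x(y)>0}P(x)W_x(y)\Bigr)\log\frac{P_Y(y)}{Q(y)}.
\]
The inner sum equals $P_Y(y)$, since the omitted terms are zero. Points $y$ with $P_Y(y)=0$ contribute nothing, so this sum is exactly $\Div(P_Y\Vert Q)$. Terms with $P(x)=0$ are dropped throughout, as the paper notes.

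\textbf{Main obstacle.} The only delicate point is support bookkeeping. The set of $y$ in each sum must be handled so that no $\log 0$ ever appears. In particular, we need that $W_x(y)>0$ together with $P(x)>0$ forces $P_Y(y)>0$, and that the reindexing from pairs $(x,y)$ with positive joint mass to $y$ with $P_Y(y)>0$ is exact. The analytic content is otherwise trivial.
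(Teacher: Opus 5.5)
Your proposal is correct and follows the paper's proof. The identity \eqref{eq:dualidentity} is obtained, as in the paper, by adding and subtracting $\log P_Y(y)$ on the support of each $W_x$; your observation that $P(x)W_x(y)>0$ forces $P_Y(y)>0$ is exactly the paper's support remark. The data-processing claim also comes from Lemma~\ref{lem:kl}(ii), with one cosmetic difference: you apply the contraction to each conditional law $P_{Z\mid B=b}$ against $P_Z$ and then average, whereas the paper applies it once to the joint law of $(B,Z)$ against the product of its marginals under $(b,z)\mapsto(b,v(z))$.
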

\begin{proof}
For the first assertion, apply Lemma~\ref{lem:kl}(ii) to the joint law of
$(B,Z)$ and the product of its marginals, under $(b,z)\mapsto(b,v(z))$.
The support condition holds since positive joint mass implies positive
marginal masses. For the second assertion, expand the left side and add
and subtract $\sum_yP_Y(y)\log P_Y(y)$. Coordinates with $P_Y(y)=0$
have zero mass in every positively weighted $W_x$, so the expansion is
valid on the indicated supports.
\end{proof}

\section{Backward description lengths and a positive test law}\label{sec:testlaw}

We first construct a reference distribution on output words. Its negative
logarithm is a description length, up to a normalizing constant. Bounding
expected description length minus entropy will therefore bound divergence
from this reference distribution.

Let $\U=\{0,1,\ldots,63\}$. For $c\in\bits$, define
\[
 \sigma_c(i)=(2i+c)\bmod64.
\]
Define the context recursively by
$u(\varnothing)=0$ and $u(cy)=\sigma_c(u(y))$. Equivalently, for
$y=y_1\cdots y_k$,
\begin{equation}\label{eq:context}
 u(y)=\sum_{j=1}^{\min(k,6)}2^{j-1}y_j.
\end{equation}
Missing coordinates are zero. This statistic may identify a short word
with a longer word ending in zeros; injectivity of $u$ is not assumed.

The Kraft inequality is the standard normalization condition for
prefix-code lengths~\cite[Ch.~5]{CT}. We use its real-length form:
fix $w_{i,c}$ satisfying the row-wise inequalities
\begin{equation}\label{eq:kraft}
 e^{-w_{i,0}\log2}+e^{-w_{i,1}\log2}\le1
 \qquad(i\in\U).
\end{equation}
Define $L(\varnothing)=0$ and
\begin{equation}\label{eq:length}
 L(cy)=w_{u(y),c}+L(y).
\end{equation}
These real description lengths need not be lengths of literal binary
codewords. Only the positive weights $2^{-L(y)}$ and their partition
bound enter the proof.

\begin{lemma}[Partition bound]\label{lem:partition}
For $Z_n=\sum_{y\in\Y_n}2^{-L(y)}$, one has $1\le Z_n\le n+1$.
Thus $Q_n(y)=2^{-L(y)}/Z_n$ is a strictly positive probability law on
$\Y_n$ and, for every $P\in\mathcal P(\Y_n)$,
\begin{equation}\label{eq:testdiv}
 \Div(P\Vert Q_n)
 \le (\log2)\E_P L-\Ent(P)+\log(n+1).
\end{equation}
\end{lemma}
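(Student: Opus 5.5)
The plan is to stratify $\Y_n$ by word length and show that each layer $\bits^k$ carries total weight at most one, with the empty word alone giving weight exactly one. Since $\Y_n=\bigsqcup_{k=0}^n\bits^k$, this gives $1\le Z_n\le n+1$ at once. The lower bound needs only $L(\varnothing)=0$, so the empty word contributes $2^0=1$, together with positivity of every other summand.

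For the upper bound, set $S_k=\sum_{y\in\bits^k}2^{-L(y)}$ and prove $S_k\le1$ by induction on $k$, with $S_0=1$. The step writes each word of length $k+1$ uniquely as $cy$ with $c\in\bits$ and $y\in\bits^k$. By \eqref{eq:length},
\[
 S_{k+1}=\sum_{y\in\bits^k}2^{-L(y)}\bigl(2^{-w_{u(y),0}}+2^{-w_{u(y),1}}\bigr)\le S_k\le1,
\]
where the inequality is the Kraft condition \eqref{eq:kraft} applied at the row $i=u(y)$; note $e^{-w\log2}=2^{-w}$.

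This step is the only point needing care, and it is mild. The recursion prepends the bit $c$ while the context $u(y)$ depends only on the suffix $y$, so the weights factor row by row. Non-injectivity of $u$ does no harm, because we sum over $y$ and never over contexts. Hence $Z_n=\sum_{k=0}^n S_k\le n+1$.

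Since every weight $2^{-L(y)}$ is positive and finite and $Z_n\ge1>0$, the law $Q_n$ is a strictly positive probability law. Therefore $\Div(P\Vert Q_n)$ is defined for every $P$. Expanding it gives
\[
 \Div(P\Vert Q_n)=\sum_{y:P(y)>0}P(y)\bigl(\log P(y)+(\log2)L(y)+\log Z_n\bigr)
 =-\Ent(P)+(\log2)\E_PL+\log Z_n,
\]
and $\log Z_n\le\log(n+1)$ yields \eqref{eq:testdiv}. In fact this is an equality before bounding $\log Z_n$.
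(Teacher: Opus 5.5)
Your proof is correct and follows the paper's argument: you decompose each nonempty word as $cy$, apply the row-wise Kraft inequality at $i=u(y)$, and then use the exact identity $\Div(P\Vert Q_n)=(\log2)\E_PL-\Ent(P)+\log Z_n$. The only cosmetic difference is that you induct on the layer masses, $S_{k+1}\le S_k\le1$, while the paper inducts on the cumulative sums via $Z_{n+1}\le1+Z_n$; this is the same computation summed differently.
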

\begin{proof}
The empty word contributes one. Every nonempty word of length at most
$n+1$ is uniquely $cy$ with $y\in\Y_n$. Therefore
\[
 Z_{n+1}=1+\sum_{y\in\Y_n}2^{-L(y)}
       \left(2^{-w_{u(y),0}}+2^{-w_{u(y),1}}\right)\le1+Z_n.
\]
Induction from $Z_0=1$ gives the upper bound. The divergence identity
\[
 \Div(P\Vert Q_n)=(\log2)\E_P L-\Ent(P)+\log Z_n
\]
then proves \eqref{eq:testdiv}. In particular, the test law includes the
empty output and every possible output length.
\end{proof}

\section{Predictive laws and the entropy increment}\label{sec:reward}

Fix temporarily $0<d<1$. For a law $\mu$ on $\U$, put
\begin{equation}\label{eq:filter}
 F_c\mu=d\mu+(1-d)(\sigma_c)_{\#}\mu.
\end{equation}
For a word $x=x_1\cdots x_k$, set
$F_x=F_{x_1}\circ\cdots\circ F_{x_k}$, with the rightmost operator
acting first; $F_\varnothing$ is the identity. Let $p_x$ be the law of
$u(Y)$ under $W_d(\cdot\mid x)$.

\begin{lemma}[Exact recursion]\label{lem:recursion}
$p_\varnothing=\delta_0$, $p_{cx}=F_cp_x$, and $p_x=F_x\delta_0$.
\end{lemma}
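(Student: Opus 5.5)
The plan is to condition on whether the leftmost input bit survives. For the base case, the input word $\varnothing$ has only the empty mask, so $Y=\varnothing$ with probability one. Since $u(\varnothing)=0$, this gives $p_\varnothing=\delta_0$.

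For the recursion, fix $c\in\bits$ and a word $x$ of length $k$. Split each mask $e\in\bits^{k+1}$ for $cx$ as $e=e_1e'$ with $e'\in\bits^k$. Then $(cx)_e=c\,x_{e'}$ if $e_1=1$, and $(cx)_e=x_{e'}$ if $e_1=0$. The weight factors as $(1-d)^{|e|}d^{k+1-|e|}=\bigl((1-d)^{e_1}d^{1-e_1}\bigr)\cdot(1-d)^{|e'|}d^{k-|e'|}$.

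Summing \eqref{eq:channel} over $e_1$ then gives the kernel identity
\[
 W_d(\cdot\mid cx)=d\,W_d(\cdot\mid x)+(1-d)\,(\tau_c)_{\#}W_d(\cdot\mid x),
\]
where $\tau_c(y)=cy$. This identity is checked coordinatewise on $\Y_{k+1}$, with masks summed rather than identified, as the paper insists.

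Next, push both sides forward under $u$. Pushforward is linear in the law. The recursive definition $u(cy)=\sigma_c(u(y))$ gives $u\circ\tau_c=\sigma_c\circ u$. Hence $u_{\#}(\tau_c)_{\#}W_d(\cdot\mid x)=(\sigma_c)_{\#}u_{\#}W_d(\cdot\mid x)=(\sigma_c)_{\#}p_x$. Combining these yields $p_{cx}=d\,p_x+(1-d)(\sigma_c)_{\#}p_x=F_cp_x$, by \eqref{eq:filter}.

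Finally, prove $p_x=F_x\delta_0$ by induction on $|x|$. The base case is $p_\varnothing=\delta_0=F_\varnothing\delta_0$. For the inductive step, write $x=x_1x'$. Then $p_x=F_{x_1}p_{x'}=F_{x_1}F_{x'}\delta_0=F_x\delta_0$. This uses the convention that the rightmost operator acts first, so $F_{x_1x'}=F_{x_1}\circ F_{x'}$.

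The only delicate step is the kernel identity. There the bookkeeping of word lengths in the disjoint union $\Y_n$ and the endpoint convention for exponents must be handled. Since $0<d<1$ here, the convention causes no issue. No step should be a serious obstacle.
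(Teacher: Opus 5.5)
Your proof is correct and follows essentially the same route as the paper: both condition on whether the prepended bit survives and then push forward under $u$ using $u(cy)=\sigma_c(u(y))$, finishing with induction. The paper phrases the split probabilistically, with an independent retention bit $B$. Your mask decomposition $e=e_1e'$ and the kernel identity $W_d(\cdot\mid cx)=d\,W_d(\cdot\mid x)+(1-d)(\tau_c)_{\#}W_d(\cdot\mid x)$ state that same step explicitly at the level of \eqref{eq:channel}.
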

\begin{proof}
Let $Y$ be the deletion output from $x$, and let $B$ independently equal
one with probability $1-d$. The deletion output from $cx$ is $Y$ if
$B=0$ and $cY$ if $B=1$. Equation~\eqref{eq:context} gives the
recursion. Induction proves the final identity.
\end{proof}

Define the weighted Jensen--Shannon divergence~\cite{Lin}
\[
 J_d(\mu,\nu)=\Ent(d\mu+(1-d)\nu)-d\Ent(\mu)-(1-d)\Ent(\nu).
\]
It is the mutual information between a binary mixture label and the
resulting observation. Define the score and local reward, in nats, by
\begin{align}
 S_d(x)&=(\log2)\E[L(Y)\mid x]-\Ent(W_d(\cdot\mid x)),\label{eq:score}\\
 r_c(\mu)&=(1-d)(\log2)\sum_i\mu_iw_{i,c}
             -J_d(\mu,(\sigma_c)_{\#}\mu).\label{eq:reward}
\end{align}

\begin{lemma}[One-symbol score bound]\label{lem:increment}
$S_d(\varnothing)=0$ and $S_d(cx)-S_d(x)\le r_c(p_x)$.
\end{lemma}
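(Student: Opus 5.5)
We write $Y$ for the output from $x$, $Y'$ for the output from $cx$, and $B$ for the independent survival bit of the new leading symbol, as in Lemma~\ref{lem:recursion}. Thus $Y'=Y$ if $B=0$ and $Y'=cY$ if $B=1$. The plan is to split the score $S_d(cx)$ into a length part and an entropy part, using this coupling.

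The base case is immediate. The empty input produces the empty output with probability one, and $L(\varnothing)=0$, so $S_d(\varnothing)=0$.

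\textbf{Length part.} By \eqref{eq:length} and the coupling,
\[
\E[L(Y')]=d\,\E L(Y)+(1-d)\bigl(\E L(Y)+\E w_{u(Y),c}\bigr)
=\E L(Y)+(1-d)\sum_i (p_x)_i w_{i,c}.
\]
This holds because $u(Y)$ has law $p_x$. Multiplying by $\log 2$ gives exactly the first term of $r_c(p_x)$ plus the old expected length.

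\textbf{Entropy part.} I would show the lower bound
\[
\Ent(Y')\ge \Ent(Y)+J_d\bigl(p_x,(\sigma_c)_\# p_x\bigr).
\]
The law of $Y'$ is the mixture $d\,P_Y+(1-d)\,P_{cY}$. The map $y\mapsto cy$ is injective, so $\Ent(cY)=\Ent(Y)$. Hence
\[
\Ent(Y')=d\,\Ent(Y)+(1-d)\,\Ent(cY)+J_d(P_Y,P_{cY})=\Ent(Y)+J_d(P_Y,P_{cY}),
\]
using the definition of $J_d$ for the two component laws. The remaining step is a data-processing inequality for $J_d$ under the map $u$. Write $J_d(P_Y,P_{cY})=I(B;Y')$, and note that $u(Y')$ has the corresponding component laws $p_x$ (when $B=0$) and $(\sigma_c)_\#p_x$ (when $B=1$), because $u(cy)=\sigma_c(u(y))$. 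Lemma~\ref{lem:mi} applied with $v=u$ then gives
\[
I(B;Y')\ge I(B;u(Y'))=J_d\bigl(p_x,(\sigma_c)_\# p_x\bigr).
\]
This relies on the identity that the mutual information between a binary mixture label and its observation equals the weighted Jensen--Shannon divergence. That identity follows from the chain rule, since $I(B;Z)=\Ent(Z)-\Ent(Z\mid B)$ and $\Ent(Z\mid B)$ is the weighted sum of the component entropies.

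\textbf{Combining.} Subtract the entropy bound from the length identity to get $S_d(cx)-S_d(x)\le r_c(p_x)$.

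The only delicate point is identifying $J_d$ with $I(B;\cdot)$ so that data processing applies. This requires $B$ to be independent of $Y$, and a careful check that the laws of $Y'$ conditional on $B$ are exactly $P_Y$ and $P_{cY}$. Everything else is routine bookkeeping with finite laws.
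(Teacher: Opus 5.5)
Your proposal is correct and follows the paper's proof step for step. It uses the same coupling with the retention bit $B$, the identity $\Ent(Y')-\Ent(Y)=I(B;Y')$ from injectivity of $y\mapsto cy$, data processing under $u$ to reach $J_d\bigl(p_x,(\sigma_c)_{\#}p_x\bigr)$, and the cost identity from \eqref{eq:length}.
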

\begin{proof}
Use the variables $Y,B$ in Lemma~\ref{lem:recursion}, and call the new
output $Z$. Conditional on $B=0$, the map $Y\mapsto Z$ is the identity;
conditional on $B=1$, it prepends $c$. Both are injective on the old
output alphabet, so $\Ent(Z\mid B)=\Ent(Y)$. Consequently
\[
 \Ent(Z)-\Ent(Y)=I(B;Z)\ge I(B;u(Z))
     =J_d(p_x,(\sigma_c)_{\#} p_x).
\]
Independence and \eqref{eq:length} give
$\E L(Z)-\E L(Y)=(1-d)\sum_i(p_x)_iw_{i,c}$. Subtract the entropy
inequality from this cost identity. For the empty input both the output
entropy and its description length are zero.
\end{proof}

\begin{remark}
The laws of $Y$ and $cY$ can overlap when embedded in the larger output
alphabet. The proof does not add a binary entropy term as though their
supports were disjoint. It uses the actual mutual information of the
retention label. This distinction is necessary for the inequality to hold.
\end{remark}

\section{A global affine majorant}\label{sec:tangent}

The reward depends nonlinearly on the predictive law. We now bound it by a
linear function of that law. This will allow us to check the bound on a
finite collection of distributions.

Let $q\in\mathcal P(\U)$ be strictly positive. Define a vector
$g_c(q)\in\R^{64}$ by
\begin{align}
 g_c(q)_i={}&(1-d)(\log2)w_{i,c}
       +d\log(F_cq)_i+(1-d)\log(F_cq)_{\sigma_c(i)}\notag\\
       &-d\log q_i
       -(1-d)\log\bigl((\sigma_c)_{\#} q\bigr)_{\sigma_c(i)}.
                                                        \label{eq:g}
\end{align}
Every logarithm displayed here is of a positive number: $F_cq\ge dq>0$,
and $((\sigma_c)_{\#} q)_{\sigma_c(i)}\ge q_i>0$.

\begin{lemma}[Exact majorant slack]\label{lem:slack}
For every $\mu\in\mathcal P(\U)$,
\begin{align}
 \ip{\mu}{g_c(q)}-r_c(\mu)
   ={}&d\Div(\mu\Vert q)
    +(1-d)\Div((\sigma_c)_{\#}\mu\Vert(\sigma_c)_{\#} q)\notag\\
    &-\Div(F_c\mu\Vert F_cq)\ge0.\label{eq:slack}
\end{align}
In particular, $r_c(\mu)\le\ip{\mu}{g_c(q)}$, with equality at $\mu=q$.
\end{lemma}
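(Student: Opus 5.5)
The plan is to prove \eqref{eq:slack} by direct expansion, converting each sum over $i$ into a sum over output coordinates. Write $\sigma=\sigma_c$ and $F=F_c$, and use the cross-entropy notation $\mathrm{CE}(P,Q)=-\sum_jP_j\log Q_j$, which is finite whenever $Q>0$ on $\supp P$. Whenever that holds we have $\Div(P\Vert Q)=\mathrm{CE}(P,Q)-\Ent(P)$.

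First I verify the positivity facts, so that every logarithm and divergence in \eqref{eq:slack} is defined. Since $q>0$, we have $Fq\ge dq>0$. Also $(\sigma_{\#}q)_j>0$ for every $j$ in the image of $\sigma$, and $\supp\sigma_{\#}\mu$ lies in that image. So the three divergences in \eqref{eq:slack} satisfy the support conditions of Section~\ref{sec:entropy}.

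Next I expand $\ip{\mu}{g_c(q)}$ term by term from \eqref{eq:g}.
\begin{itemize}
\item The $w$-term gives exactly $(1-d)(\log2)\sum_i\mu_iw_{i,c}$, which cancels the cost part of $r_c(\mu)$ in \eqref{eq:reward}.
\item For any function $h$ on $\U$, the pushforward identity $\sum_i\mu_ih(\sigma(i))=\sum_j(\sigma_{\#}\mu)_jh(j)$ holds. Applying it, the two $\log(Fq)$ terms combine into $\sum_j\bigl(d\mu_j+(1-d)(\sigma_{\#}\mu)_j\bigr)\log(Fq)_j=-\mathrm{CE}(F\mu,Fq)$.
\item The last two terms become $d\,\mathrm{CE}(\mu,q)+(1-d)\,\mathrm{CE}(\sigma_{\#}\mu,\sigma_{\#}q)$.
\end{itemize}
Subtracting $r_c(\mu)$ adds $J_d(\mu,\sigma_{\#}\mu)=\Ent(F\mu)-d\Ent(\mu)-(1-d)\Ent(\sigma_{\#}\mu)$, using $F\mu=d\mu+(1-d)\sigma_{\#}\mu$ from \eqref{eq:filter}. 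Pairing each cross-entropy with its matching entropy then gives exactly the right-hand side of \eqref{eq:slack}. Coordinates with zero $\mu$-mass contribute zero throughout, because every logarithm involved is finite.

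For nonnegativity I apply Lemma~\ref{lem:kl}(iii) with weights $(d,1-d)$, taking $P_1=\mu$, $Q_1=q$, $P_2=\sigma_{\#}\mu$ and $Q_2=\sigma_{\#}q$. The mixtures are $F\mu$ and $Fq$, so the lemma gives $\Div(F\mu\Vert Fq)\le d\Div(\mu\Vert q)+(1-d)\Div(\sigma_{\#}\mu\Vert\sigma_{\#}q)$, which is the claimed inequality. At $\mu=q$ all three divergences vanish, so $r_c(q)=\ip{q}{g_c(q)}$.

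I expect no real obstacle here. The only delicate step is the reindexing of $\sum_i\mu_i\log(Fq)_{\sigma(i)}$ and $\sum_i\mu_i\log(\sigma_{\#}q)_{\sigma(i)}$ through the pushforward. That step uses only the fact that the summand depends on $i$ through $\sigma(i)$, so non-injectivity of $\sigma$ causes no trouble. It also needs the support bookkeeping so that no $\log0$ appears.
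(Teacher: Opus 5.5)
Your proposal is correct and follows essentially the same route as the paper. You expand $\ip{\mu}{g_c(q)}$ using the pushforward reindexing, subtract $r_c(\mu)$ so that each cross-entropy pairs with its entropy, and get nonnegativity from Lemma~\ref{lem:kl}(iii) with weights $(d,1-d)$. Your cross-entropy notation and the image-of-$\sigma_c$ support argument only repackage the paper's sum restricted to $\nu_i>0$ and its empty-preimage remark.
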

\begin{proof}
The support condition for the second divergence holds because a zero
coordinate of $(\sigma_c)_{\#} q$, with $q$ positive, has an empty
preimage under $\sigma_c$; the corresponding pushforward of $\mu$ is
also zero. The other denominator laws are positive.

Write $\nu=(\sigma_c)_{\#}\mu$ and $v=(\sigma_c)_{\#} q$.
By the pushforward summation identity, expansion of \eqref{eq:g} yields
\begin{align*}
 \ip{\mu}{g_c(q)}={}&(1-d)(\log2)\ip{\mu}{w_{\cdot,c}}
 +\sum_i(F_c\mu)_i\log(F_cq)_i\\
 &-d\sum_i\mu_i\log q_i
 -(1-d)\sum_{i:\,\nu_i>0}\nu_i\log v_i.
\end{align*}
Subtract \eqref{eq:reward} and expand the three entropies to obtain the
equality in \eqref{eq:slack}. Lemma~\ref{lem:kl}(iii), applied to the two
mixture components, makes the right side nonnegative. At $\mu=q$ all
three divergences vanish.
\end{proof}

This proof is global on the closed simplex. It does not extend a
derivative formula to its boundary without justification. In particular,
the actual predictive law of a short or constant input can have zeros;
the reference law alone needs full support.

\section{Finite windows, reachable laws, and the Bellman certificate}\label{sec:window}

For a vector $v\in\R^{64}$ let
$(B_cv)_i=dv_i+(1-d)v_{\sigma_c(i)}$. Define $A_x$ by
\[
 A_\varnothing=I,\qquad A_{cx}=A_xB_c.
\]
Thus, if $x=x_1\cdots x_k$, then
$A_x=B_{x_k}\cdots B_{x_1}$. In evaluating $A_xv$, the operation
$B_{x_1}$ is applied first. This is the reverse composition order to the
forward action on probability laws.

\begin{lemma}[Adjoint and positivity]\label{lem:adjoint}
$\ip{F_x\mu}{v}=\ip{\mu}{A_xv}$.
Moreover $A_x\one=\one$, and $v\le v'$ coordinatewise implies
$A_xv\le A_xv'$ coordinatewise.
\end{lemma}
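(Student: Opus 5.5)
The plan is induction on the length of $x$, reducing everything to the single-letter case: the one-step adjoint identity $\ip{F_c\mu}{v}=\ip{\mu}{B_cv}$, and the facts $B_c\one=\one$ and monotonicity of $B_c$. Each of these is immediate from the definitions.

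\emph{Single-letter facts.} Since $F_c\mu=d\mu+(1-d)(\sigma_c)_{\#}\mu$, we have
\[
\ip{F_c\mu}{v}=d\sum_i\mu_iv_i+(1-d)\sum_j\bigl((\sigma_c)_{\#}\mu\bigr)_jv_j .
\]
The pushforward summation identity $\sum_j((\sigma_c)_{\#}\mu)_jv_j=\sum_i\mu_iv_{\sigma_c(i)}$ comes from regrouping the sum over fibers of $\sigma_c$, so no injectivity is needed. It turns the right side into $\sum_i\mu_i\bigl(dv_i+(1-d)v_{\sigma_c(i)}\bigr)=\ip{\mu}{B_cv}$. Next, $(B_c\one)_i=d+(1-d)=1$. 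If $v\le v'$, then each coordinate $(B_cv)_i$ is a combination of $v_i$ and $v_{\sigma_c(i)}$ with nonnegative coefficients $d,1-d$, so $B_cv\le B_cv'$. The same argument shows $B_c$ preserves coordinatewise order; it uses only $0\le d\le1$.

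\emph{Induction.} The base case is $F_\varnothing=A_\varnothing=I$. For the step, write the word as $cx$, so $F_{cx}=F_c\circ F_x$ (the leftmost letter is applied last) and $A_{cx}=A_xB_c$. Then
\[
\ip{F_{cx}\mu}{v}=\ip{F_c(F_x\mu)}{v}=\ip{F_x\mu}{B_cv}=\ip{\mu}{A_x(B_cv)}=\ip{\mu}{A_{cx}v},
\]
where the middle equalities use the single-letter identity and then the induction hypothesis applied to the vector $B_cv$. We also get $A_{cx}\one=A_x(B_c\one)=A_x\one=\one$. Monotonicity follows because a composition of monotone maps is monotone: $v\le v'$ gives $B_cv\le B_cv'$, and then $A_xB_cv\le A_xB_cv'$.

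The only point needing care is the composition order. The definition $F_x=F_{x_1}\circ\cdots\circ F_{x_k}$ must match $A_{cx}=A_xB_c$, so the induction has to peel off the first letter $c$, which is where the definitions place the recursion. This avoids rewriting $A_x$ as $B_{x_k}\cdots B_{x_1}$ directly. Everything else is finite linear algebra on $\R^{64}$.
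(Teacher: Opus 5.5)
Your proof is correct and takes essentially the same approach as the paper: you derive the one-step adjoint identity by reindexing the pushforward sum over the fibers of $\sigma_c$, you note that $B_c$ is a nonnegative weighted average that fixes constants and preserves order, and you iterate in the order given by $F_{cx}=F_c\circ F_x$ and $A_{cx}=A_xB_c$. Your induction, which peels off the first letter with the hypothesis quantified over all vectors $v$, is simply the written-out form of the paper's ``iterate it in the stated order.''
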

\begin{proof}
The one-step identity follows by reindexing the pushforward term:
$\sum_j((\sigma_c)_{\#}\mu)_jv_j=\sum_i\mu_iv_{\sigma_c(i)}$.
Iterate it in the stated order. Each $B_c$ is a nonnegative weighted
average with weights summing to one, so it preserves constants and order;
its compositions do as well.
\end{proof}

Fix a positive window length $m$. Let $s(x)\in\bits^m$ be the first
$m$ bits of $x$, padded on the right with zeros if $|x|<m$. Let
$s^+(s,c)$ be the first $m$ bits of $cs$. Define
\[
 \mathbf u=(1/64,\ldots,1/64),\qquad q_s=F_s\mathbf u.
\]
Every coordinate of $q_s$ is at least $d^m/64>0$.

\begin{lemma}[Every actual law is represented]\label{lem:reachable}
For each finite word $x$, there is a law $\mu$ such that
$p_x=F_{s(x)}\mu$. Also $s(cx)=s^+(s(x),c)$.
\end{lemma}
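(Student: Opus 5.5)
We need to show that for each finite word $x$ there is a law $\mu$ with $p_x=F_{s(x)}\mu$, and that $s(cx)=s^+(s(x),c)$. We split on whether $|x|\ge m$ or $|x|<m$; the second identity is handled separately by direct comparison of strings.

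\emph{Long words.} If $|x|\ge m$, write $x=s\,z$ with $s=s(x)$ the first $m$ bits and $z$ the remaining suffix. By Lemma~\ref{lem:recursion}, $p_x=F_x\delta_0$. Composition of the operators follows concatenation, $F_{sz}=F_s\circ F_z$, because $F_x=F_{x_1}\circ\cdots\circ F_{x_k}$ with the rightmost operator acting first. Hence $p_x=F_s(F_z\delta_0)=F_s\,p_z$. Take $\mu=p_z$, which is a law since each $F_c$ maps laws to laws: it is a convex combination of $\mu$ and a pushforward of $\mu$.

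\emph{Short words.} If $|x|=k<m$, then $s(x)=x\,0^{m-k}$. We claim that $F_0\delta_0=\delta_0$. Indeed $\sigma_0(0)=0$, so $(\sigma_0)_{\#}\delta_0=\delta_0$, and therefore $F_0\delta_0=d\delta_0+(1-d)\delta_0=\delta_0$. Iterating gives $F_{0^{m-k}}\delta_0=\delta_0$. Consequently
\[
 F_{s(x)}\delta_0=F_x F_{0^{m-k}}\delta_0=F_x\delta_0=p_x,
\]
so $\mu=\delta_0$ works. This is the only point that needs care: right-padding with zeros must be harmless. It is, because the padding acts first on $\delta_0$, and $\delta_0$ is fixed by $F_0$. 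Equivalently, prepending a deleted-or-kept $0$ to the empty output leaves $u=0$.

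\emph{Shift identity.} We check $s(cx)=s^+(s(x),c)$ by comparing strings.
\begin{itemize}
\item If $|x|\ge m$, the first $m$ bits of $cx$ are $c$ followed by the first $m-1$ bits of $x$. These coincide with the first $m-1$ bits of $s(x)$, so they equal the first $m$ bits of $c\,s(x)$.
\item If $|x|<m$, then $c\,s(x)=c\,x\,0^{m-|x|}$. Its first $m$ bits are $cx$ padded with zeros to length $m$. When $|cx|\le m$ this is exactly $s(cx)$.
\end{itemize}
Both cases are routine.
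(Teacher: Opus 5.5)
Your proof is correct and matches the paper's argument: split on $|x|\ge m$ versus $|x|<m$, take $\mu=p_z$ in the first case and $\mu=\delta_0$ in the second (using $F_0\delta_0=\delta_0$ to absorb the zero padding), and check the window update by direct truncation and padding in the two length cases. The qualifier ``when $|cx|\le m$'' in your last case is not an extra condition, since $|x|<m$ already forces it.
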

\begin{proof}
If $|x|\ge m$, write $x=s(x)z$ and take $\mu=p_z$. If $|x|=k<m$,
take $\mu=\delta_0$ and note that
$F_0\delta_0=d\delta_0+(1-d)\delta_{\sigma_0(0)}=\delta_0$.
Thus $F_{x0^{m-k}}\delta_0=F_x\delta_0=p_x$. The update identity follows
by checking truncation and padding directly in the two length cases.
\end{proof}

\begin{definition}[Finite certificate]\label{def:certificate}
A potential $h:\bits^m\to\R$ and number $\theta$ form a certificate
for $(w,d,m)$ if, for every $s\in\bits^m$, $c\in\bits$, and $i\in\U$,
\begin{equation}\label{eq:bellman}
 (A_sg_c(q_s))_i+h(s^+(s,c))-h(s)\le\theta.
\end{equation}
\end{definition}

\begin{lemma}[From coordinates to arbitrary suffixes]\label{lem:localcert}
If \eqref{eq:bellman} holds, then for every finite word $x$ and bit $c$,
\[
 r_c(p_x)+h(s(cx))-h(s(x))\le\theta.
\]
\end{lemma}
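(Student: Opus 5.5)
Fix a finite word $x$ and a bit $c$, and put $s=s(x)$. By Lemma~\ref{lem:reachable} there is a law $\mu$ on $\U$ with $p_x=F_s\mu$, and moreover $s(cx)=s^+(s,c)$. The target inequality therefore reads
\[
 r_c(F_s\mu)+h(s^+(s,c))-h(s)\le\theta ,
\]
and we reduce it to the coordinatewise certificate \eqref{eq:bellman} in three moves: tangent majorization, adjoint transfer, and averaging over coordinates.

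\emph{Tangent majorization.} The reference law $q_s=F_s\mathbf u$ is strictly positive, with every coordinate at least $d^m/64$. Lemma~\ref{lem:slack} with $q=q_s$ therefore applies to the arbitrary law $F_s\mu$, including laws with zeros, and gives
\[
 r_c(F_s\mu)\le\ip{F_s\mu}{g_c(q_s)} .
\]

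\emph{Adjoint transfer.} By Lemma~\ref{lem:adjoint},
\[
 \ip{F_s\mu}{g_c(q_s)}=\ip{\mu}{A_sg_c(q_s)} .
\]
At this point the ordering conventions must be checked. We need $F_s=F_{s_1}\circ\cdots\circ F_{s_m}$, which is the order used in $q_s$, and the adjoint of $F_{s_1}\circ\cdots\circ F_{s_m}$ is $B_{s_m}\cdots B_{s_1}=A_s$. This matches the stated definition $A_{cx}=A_xB_c$, so Lemma~\ref{lem:adjoint} is invoked with the same word $s$.

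\emph{Averaging over coordinates.} Since $\mu$ is a probability vector and $h(s^+(s,c))-h(s)$ is a constant, we have
\[
 \ip{\mu}{A_sg_c(q_s)}+h(s^+)-h(s)=\sum_i\mu_i\Bigl((A_sg_c(q_s))_i+h(s^+)-h(s)\Bigr) .
\]
Each bracket is at most $\theta$ by \eqref{eq:bellman}, so the convex combination is at most $\theta$. Combining the three moves proves the lemma.

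The expected obstacle is bookkeeping rather than analysis. One point is confirming the composition order in the adjoint identity, done above. The other is the short-word case of Lemma~\ref{lem:reachable}, where $\mu=\delta_0$ and $s(x)$ is zero-padded. Neither Lemma~\ref{lem:slack} nor the averaging step needs positivity of $\mu$, so degenerate laws cause no difficulty; only positivity of $q_s$ is required, and that holds for every window.
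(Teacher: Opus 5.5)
Your proof is correct and follows the paper's argument step for step. You represent $p_x=F_{s(x)}\mu$ via Lemma~\ref{lem:reachable}, bound $r_c(p_x)$ by the affine majorant at the positive reference $q_s$ (Lemma~\ref{lem:slack}), transfer it to $\ip{\mu}{A_sg_c(q_s)}$ by Lemma~\ref{lem:adjoint}, and average \eqref{eq:bellman} against the weights $\mu_i$ using the window update $s(cx)=s^+(s(x),c)$; your explicit check of the composition order in the adjoint identity is a detail the paper leaves implicit.
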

\begin{proof}
By Lemma~\ref{lem:reachable}, $p_x=F_s\mu$ for $s=s(x)$ and a law $\mu$.
Lemmas~\ref{lem:slack} and~\ref{lem:adjoint} imply
$r_c(p_x)\le\ip{\mu}{A_sg_c(q_s)}$.
Multiply \eqref{eq:bellman} by $\mu_i\ge0$ and sum over $i$; its potential
and constant terms are unchanged because $\sum_i\mu_i=1$. Use the
window update identity.
\end{proof}

\begin{proposition}[Uniform score bound]\label{prop:score}
If \eqref{eq:bellman} holds and $0\le h(s)\le B$ for every state, then
\begin{equation}\label{eq:telescoping}
 S_d(x)+h(s(x))\le |x|\theta+h(0^m),\qquad
 S_d(x)\le |x|\theta+B.
\end{equation}
\end{proposition}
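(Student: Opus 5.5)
We prove the first inequality in \eqref{eq:telescoping} by induction on $|x|$, building the word $x$ by prepending one bit at a time. This matches the recursions already established: Lemma~\ref{lem:increment} controls the score increment under $x\mapsto cx$, and Lemma~\ref{lem:localcert} controls the reward together with the potential change under the same update.

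\emph{Base case.} Take $x=\varnothing$. Lemma~\ref{lem:increment} gives $S_d(\varnothing)=0$. By the definition of the window, $s(\varnothing)=0^m$, since the empty word is padded with zeros. The inequality therefore reads $0+h(0^m)\le 0\cdot\theta+h(0^m)$, which holds with equality.

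\emph{Inductive step.} Assume the bound holds for $x$, and fix a bit $c$. Lemma~\ref{lem:increment} gives $S_d(cx)\le S_d(x)+r_c(p_x)$. Adding $h(s(cx))$ to both sides yields
\[
 S_d(cx)+h(s(cx))\le S_d(x)+h(s(x))+\bigl(r_c(p_x)+h(s(cx))-h(s(x))\bigr).
\]
Lemma~\ref{lem:localcert} bounds the bracketed term by $\theta$, and the induction hypothesis bounds $S_d(x)+h(s(x))$ by $|x|\theta+h(0^m)$. The right side is therefore at most $(|x|+1)\theta+h(0^m)=|cx|\theta+h(0^m)$. Every finite word arises from $\varnothing$ by prepending bits, so this induction covers all $x$.

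\emph{Second inequality.} The hypotheses give $h(s(x))\ge0$ and $h(0^m)\le B$. Dropping the nonnegative term $h(s(x))$ from the left side and replacing $h(0^m)$ by $B$ on the right gives $S_d(x)\le|x|\theta+B$.

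The only point needing care is that Lemma~\ref{lem:localcert} must apply to every $x$, including short words and words whose predictive law $p_x$ has zero coordinates. That lemma, via Lemma~\ref{lem:reachable}, is stated for every finite word, so no further work is needed. We should also confirm that Lemma~\ref{lem:increment} needs no positivity of $p_x$; it does not, because $J_d$ and the cost identity are defined on the whole closed simplex. With these observations the argument is a direct telescoping, and none of the steps is a real obstacle.
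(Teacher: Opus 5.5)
Your proof is correct and follows the paper's argument: induction on word length with equality at the empty word (using $s(\varnothing)=0^m$), adding Lemma~\ref{lem:increment} and Lemma~\ref{lem:localcert} in the inductive step, and deducing the second bound from $h(0^m)-h(s(x))\le B$. Your remarks about short words and zero coordinates of $p_x$ are accurate, and they match the paper's observation that no initial window is discarded.
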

\begin{proof}
The first assertion follows by induction on word length. It is equality
for the empty word. In the induction step add Lemma~\ref{lem:increment}
and Lemma~\ref{lem:localcert}, and apply the induction hypothesis to the
suffix. The second assertion follows from
$h(0^m)-h(s(x))\le B$. Equivalently, one can prepend the bits from
right to left and cancel the consecutive potential terms. The same proof
works for words shorter than $m$; no initial window is discarded.
\end{proof}

\section{Deletion composition and a general converse}\label{sec:converse}

Rahmati and Duman~\cite{RD} proved the capacity inequality
$C(\lambda d_0+1-\lambda)\le\lambda C(d_0)$ for
$0\le\lambda\le1$. The normalized extension used here has the same
form. We prove the underlying channel composition directly and retain
the finite-block correction by using one common auxiliary output law.

\begin{lemma}[Composition of deletion channels]\label{lem:composition}
Fix $0<d_0<1$ and $d_0\le d<1$, and put
$\alpha=(d-d_0)/(1-d_0)$. A deletion channel with probability $\alpha$,
followed independently by a deletion channel with probability $d_0$,
equals $W_d$. The intermediate word $Z$ has
$\E[|Z|\mid x]=(1-\alpha)|x|$.
\end{lemma}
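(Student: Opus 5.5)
We will prove the composition by working at the level of individual coordinates and retention masks. Realize the first channel by independent bits $E_i$ with $\Prob(E_i=1)=1-\alpha$, so the intermediate word is $Z=x_E$. Realize the second channel by independent bits $E'_j$ indexed by the positions of $Z$, with $\Prob(E'_j=1)=1-d_0$, so the output is $Y=Z_{E'}$. Because the second-stage coins are attached to surviving positions, we may equivalently attach one second-stage coin $G_i$ to every input coordinate $i$, independent of everything else. We then use $G_i$ only when $E_i=1$. Conditional on $E$, the coins $(G_i)_{E_i=1}$ have exactly the law of $E'$, and the order of survivors is preserved. Hence $Y=x_{E\wedge G}$, where $(E\wedge G)_i=E_iG_i$.

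Next we compute the law of the combined mask. The bits $E_iG_i$ are independent across $i$, and each equals one with probability
\[
(1-\alpha)(1-d_0)=\frac{1-d}{1-d_0}(1-d_0)=1-d,
\]
since $1-\alpha=(1-d)/(1-d_0)$. Therefore the combined mask has law $\prod_i(1-d)^{e_i}d^{1-e_i}$. Summing over masks with $x_e=y$ reproduces \eqref{eq:channel} exactly. This matches the paper's convention of summing distinct masks rather than identifying them. We also check that $\alpha\in[0,1)$ from $d_0\le d<1$, so both stages are genuine deletion channels, including $\alpha=0$ with the exponent-zero convention.

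For the expectation, $|Z|=\sum_iE_i$, so linearity gives $\E[|Z|\mid x]=(1-\alpha)|x|$.

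The only delicate step is the formal identification of the two-stage kernel $\sum_z W_\alpha(z\mid x)W_{d_0}(y\mid z)$ with $W_d(y\mid x)$, which must be done without invoking the coupling informally. To do this, we expand both kernels as mask sums. Pairs $(e,e')$ with $e\in\bits^n$ and $e'\in\bits^{|e|}$ correspond bijectively to pairs $(e,f)$ with $f\le e$: set $f$ equal to $e'$ placed at the ones of $e$. Under this correspondence, $(x_e)_{e'}=x_f$. The weight factors as
\[
(1-\alpha)^{|e|}\alpha^{n-|e|}(1-d_0)^{|f|}d_0^{|e|-|f|}.
\]
Summing over $e\ge f$ for fixed $f$ gives the binomial identity
\[
(1-\alpha)^{|f|}(1-d_0)^{|f|}\bigl(\alpha+(1-\alpha)d_0\bigr)^{n-|f|}.
\]
Since $\alpha+(1-\alpha)d_0=d$, this equals $(1-d)^{|f|}d^{n-|f|}$, which completes the identification.
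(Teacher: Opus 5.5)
Your proof is correct and follows the paper's argument. You attach an independent second retention bit to every input coordinate, use it only at first-stage survivors, note that the products $E_iG_i$ are independent with retention probability $(1-\alpha)(1-d_0)=1-d$, and get the mean length by linearity. Your final mask-sum computation, which pairs $(e,e')$ with $f\le e$ and uses $\alpha+(1-\alpha)d_0=d$, explicitly verifies the kernel identity that the paper asserts in a single sentence about the coupling.
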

\begin{proof}
$0\le\alpha<1$ and $(1-\alpha)(1-d_0)=1-d$.
For each original coordinate independently draw a first retention bit
with probability $1-\alpha$ and a second with probability $1-d_0$.
Only the second bits attached to first-stage survivors are used. This
constructs the same law as assigning independent second retention bits
to the intermediate word. The products of the two bits are independent
across original coordinates, with retention probability $1-d$. Their
subsequence therefore has kernel \eqref{eq:channel}. The mean-length
identity is the sum of the $|x|$ first-stage retention probabilities.
\end{proof}

\begin{theorem}[General finite-certificate converse]\label{thm:generic}
Let $0<d_0<1$, let $w$ satisfy \eqref{eq:kraft}, and suppose
\eqref{eq:bellman} holds at $d_0$ with $0\le h\le B$ and $\theta\ge0$.
Then for $d_0\le d<1$, $n\ge1$, and every input word $x\in\bits^n$,
\begin{equation}\label{eq:divall}
 \Div(W_d(\cdot\mid x)\Vert Q_n)
 \le \frac{1-d}{1-d_0}n\theta+B+\log(n+1).
\end{equation}
Consequently every input distribution and every block code satisfy
\begin{align}
 \frac{I(X;Y)}{n\log2}
 &\le\frac{1-d}{1-d_0}\frac{\theta}{\log2}
           +\frac{B/\log2+\log_2(n+1)}{n},\label{eq:general-info}\\
 (1-P_e)R_n
 &\le\frac{1-d}{1-d_0}\frac{\theta}{\log2}
           +\frac{B/\log2+\log_2(n+1)+1}{n}.\label{eq:general-code}
\end{align}
\end{theorem}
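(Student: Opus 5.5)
The plan is to prove the pointwise divergence bound \eqref{eq:divall} first, by factoring $W_d$ through $W_{d_0}$ with Lemma~\ref{lem:composition}. The information and coding bounds then follow by averaging and a Fano-type argument.

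\textbf{Step 1: factor the channel.} Fix $x\in\bits^n$ and put $\alpha=(d-d_0)/(1-d_0)$. By Lemma~\ref{lem:composition},
\[
 W_d(\cdot\mid x)=\sum_z\lambda_z W_{d_0}(\cdot\mid z),
\]
where $\lambda_z$ is the law of the intermediate word $Z$ after an $\alpha$-deletion of $x$. Every such $z$ has $|z|\le n$, so $W_{d_0}(\cdot\mid z)$ is a law on $\Y_{|z|}\subseteq\Y_n$ and can be compared with the single fixed law $Q_n$. This common reference law is what keeps the finite-block correction at $\log(n+1)$ rather than depending on $|z|$.

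\textbf{Step 2: convexity with a fixed second argument.} Apply Lemma~\ref{lem:kl}(iii) with $P_z=W_{d_0}(\cdot\mid z)$ and all $Q_z=Q_n$. Since $Q_n$ is strictly positive, every support condition holds. This gives
\[
 \Div(W_d(\cdot\mid x)\Vert Q_n)\le\sum_z\lambda_z\Div(W_{d_0}(\cdot\mid z)\Vert Q_n).
\]

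\textbf{Step 3: bound each term by the score.} By Lemma~\ref{lem:partition}, regarding $P=W_{d_0}(\cdot\mid z)$ as a law on $\Y_n$, each term is at most $S_{d_0}(z)+\log(n+1)$. Proposition~\ref{prop:score} at $d_0$ bounds $S_{d_0}(z)$ by $|z|\theta+B$; this is valid for every finite word $z$, including short ones and the empty word.

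\textbf{Step 4: average over $Z$.} Averaging and using $\E[|Z|\mid x]=(1-\alpha)n=\frac{1-d}{1-d_0}n$ gives \eqref{eq:divall}. The bound is linear in $|z|$, so no sign condition on $\theta$ is actually needed here. The case $d=d_0$ is simply $\alpha=0$.

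\textbf{Information bound.} For \eqref{eq:general-info}, use identity \eqref{eq:dualidentity} with $Q=Q_n$:
\[
 I(X;Y)=\sum_xP(x)\Div(W_d(\cdot\mid x)\Vert Q_n)-\Div(P_Y\Vert Q_n)\le\sum_xP(x)\Div(W_d(\cdot\mid x)\Vert Q_n).
\]
Insert \eqref{eq:divall} and divide by $n\log2$.

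\textbf{Coding bound.} For \eqref{eq:general-code}, let $U$ be uniform on $\{1,\dots,M\}$ and $X=f(U)$. First, $I(U;Y)=\frac1M\sum_u\Div(W_{f(u)}\Vert P_Y)=\sum_xP_X(x)\Div(W_x\Vert P_Y)=I(X;Y)$ by regrouping messages with equal codewords, so a non-injective encoder causes no difficulty. Then I would prove Fano's inequality in the finite form available from Section~\ref{sec:entropy}. Let $E=\one\{g(Y)\ne U\}$. Since $E$ is a function of $(U,Y)$, the chain rule gives
\[
 \Ent(U\mid Y)\le\Ent(E\mid Y)+\Ent(U\mid E,Y)\le\log2+P_e\log M.
\]
Here we use that conditional on $E=0$ the message $U$ is determined by $Y$, and conditional on $E=1$ it has at most $M$ values. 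Combining,
\[
 \log M=\Ent(U)=I(U;Y)+\Ent(U\mid Y)\le I(X;Y)+\log2+P_e\log M.
\]
Dividing by $n\log 2$ and using the information bound gives \eqref{eq:general-code}, with the extra $1/n$ coming from the $\log2$ term.

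\textbf{Main obstacle.} I expect the delicate point to be Step~3. The score bound lives at $d_0$, while the test law $Q_n$ is indexed by the outer block length $n$ rather than by $|z|$. One must check that Lemma~\ref{lem:partition} applies to laws supported on the smaller set $\Y_{|z|}$ with the larger constant $Z_n\le n+1$. It does, because the identity $\Div(P\Vert Q_n)=(\log2)\E_PL-\Ent(P)+\log Z_n$ holds for every $P$ on $\Y_n$. It must also be checked that $S_{d_0}(z)$ is exactly $(\log2)\E_PL-\Ent(P)$ for $P=W_{d_0}(\cdot\mid z)$, which holds by definition \eqref{eq:score}.
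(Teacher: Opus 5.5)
Your proposal is correct and follows the paper's proof essentially step for step. It writes $W_d(\cdot\mid x)$ as a mixture via Lemma~\ref{lem:composition} against the single law $Q_n$, applies convexity from Lemma~\ref{lem:kl}(iii), bounds each intermediate word by Lemma~\ref{lem:partition} plus Proposition~\ref{prop:score} at $d_0$, uses the mean-length identity, and finishes with \eqref{eq:dualidentity} and the same chain-rule form of Fano's inequality. The only cosmetic difference is that you pass through $I(U;Y)=I(X;Y)$ by regrouping messages, whereas the paper applies \eqref{eq:dualidentity} directly to the message channel; your remark that $\theta\ge0$ is not needed for \eqref{eq:divall}, because the bound is affine in $|z|$, is also right.
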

\begin{proof}
First consider any intermediate word $z$ of length $k\le n$. Its
second-stage output law naturally embeds in $\Y_n$. Embedding preserves
its entropy and expected cost: it only adds zero-mass coordinates.
Lemma~\ref{lem:partition} and Proposition~\ref{prop:score}, at $d_0$, give
\[
 \Div(W_{d_0}(\cdot\mid z)\Vert Q_n)
 \le k\theta+B+\log(n+1).
\]
Crucially, the same positive law $Q_n$ is used for every value of $k$.
By Lemma~\ref{lem:composition}, $W_d(\cdot\mid x)$ is the mixture of
these laws over $Z\sim W_\alpha(\cdot\mid x)$. Apply convexity in the
first argument of divergence, Lemma~\ref{lem:kl}(iii), and then the
mean-length identity. This proves \eqref{eq:divall}.

Apply \eqref{eq:dualidentity}, discard the nonnegative output divergence,
and use \eqref{eq:divall} to obtain \eqref{eq:general-info}.
For the decoding assertion we use the elementary form of Fano's
inequality~\cite[Ch.~2]{CT}, whose short proof follows. Let $U$ be the
uniform message,
$\widehat U=g(Y)$, and $E=\one_{\{U\ne\widehat U\}}$. Since $E$ is
determined by $(U,Y)$, the conditional entropy chain rule gives
\begin{align*}
 \Ent(U\mid Y)
 &=\Ent(E\mid Y)+\Ent(U\mid E,Y)\\
 &\le\log2+P_e\log M.
\end{align*}
Indeed, when $E=0$ the message is determined by $Y$, and when $E=1$
its conditional entropy is at most $\log M$. This argument also holds
for $M=1$. Therefore
\[
 (1-P_e)\log M\le I(U;Y)+\log2.
\]
Equation~\eqref{eq:dualidentity} applied to the message channel bounds
$I(U;Y)$ by the right side of \eqref{eq:divall}, since each message
selects one input word. Divide by $n\log2$.
\end{proof}

\begin{corollary}[Operational and sequence consequences]\label{cor:operational}
Under the hypotheses of Theorem~\ref{thm:generic}, set
$a_d=(1-d)\theta/((1-d_0)\log2)$. Then $C(d)\le a_d$.
Every sequence of reliable codes with lengths tending to infinity has,
for every $\varepsilon>0$, eventual rate strictly below $a_d+\varepsilon$.
\end{corollary}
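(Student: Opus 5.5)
The plan is to derive both assertions directly from the finite-block code bound \eqref{eq:general-code} of Theorem~\ref{thm:generic}. All its correction terms are explicit and vanish as $n\to\infty$. Throughout, write $\kappa_n=(B/\log2+\log_2(n+1)+1)/n$, so that every length-$n$ code at deletion probability $d$ satisfies $(1-P_e)R_n\le a_d+\kappa_n$. Since $B$ is a fixed constant, $\kappa_n\to0$.

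\textbf{Sequence statement.} Let $n_j\to\infty$ and let the length-$n_j$ codes have $P_{e,j}\to0$. Fix $\varepsilon>0$. For all large $j$ we have $P_{e,j}\le1/2$, so we may divide and obtain
\[
 R_{n_j}\le\frac{a_d+\kappa_{n_j}}{1-P_{e,j}}.
\]
As $j\to\infty$ the right side tends to $a_d$, because $a_d$ is finite ($\theta\ge0$ is fixed) and $\kappa_{n_j}\to0$. Hence it is eventually strictly less than $a_d+\varepsilon$. This gives the second claim. The only care needed is that rates are nonnegative and $M\ge1$, so no sign issue arises. Strictness comes from the limit $a_d$ being strictly below $a_d+\varepsilon$.

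\textbf{Capacity bound.} Take any achievable rate $R$. By definition there are codes at every length $n$ with $P_e\to0$ and $R_n\to R$. The sequence statement, applied along $n_j=j$, gives $R_{n}<a_d+\varepsilon$ for all large $n$. Taking the limit yields $R\le a_d+\varepsilon$. Since $\varepsilon$ is arbitrary, $R\le a_d$, and taking the supremum over achievable rates gives $C(d)\le a_d$. The achievable set is nonempty because it contains zero, so the supremum is well defined, and it is bounded by $a_d$.

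I expect no real obstacle. The main point to handle cleanly is dividing by $1-P_e$ only once $P_e$ is bounded away from one. A secondary point is to make sure the argument uses only the hypothesis $d_0\le d<1$ under which \eqref{eq:general-code} was proved.
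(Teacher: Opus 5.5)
Your proposal is correct and follows essentially the same route as the paper: once $P_{e,j}$ is bounded away from one, divide \eqref{eq:general-code} by $1-P_{e,j}$ and let the explicit correction term vanish to get the eventual strict bound. Then apply this along the all-length code sequence of an achievable rate and take the supremum over the nonempty achievable set. Your choice of $P_{e,j}\le 1/2$ instead of $P_{e,j}<1$, and your $\varepsilon$-argument instead of passing directly to the limit of the right-hand side, are immaterial variations.
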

\begin{proof}
Let $b_n=(B/\log2+\log_2(n+1)+1)/n$. The elementary limit
$\log(n+1)/n\to0$ gives $b_{n_j}\to0$. Since $P_{e,j}\to0$,
eventually $P_{e,j}<1$, and \eqref{eq:general-code} gives
\[
 R_{n_j}\le\frac{a_d+b_{n_j}}{1-P_{e,j}}\longrightarrow a_d
 \quad\text{on the right-hand side.}
\]
This proves the eventual assertion without assuming that rates converge
or are bounded in advance. For any achievable rate apply it to the
all-length sequence, then take the rate limit. The achievable set is
nonempty and bounded above by $a_d$, so its supremum is at most $a_d$.
\end{proof}

\begin{remark}[What kind of converse is proved]
The finite-block inequality gives, for $R_n>0$,
$P_e\ge1-(a_d+b_n)/R_n$. In particular, if $R_{n_j}\to R>a_d$,
then $\liminf_jP_{e,j}\ge1-a_d/R>0$. This elementary consequence
does not say that the error tends to one; a strong converse is not
asserted.
\end{remark}

\section{The exact numerical certificate}\label{sec:integer}

It remains to supply a potential and code lengths satisfying the finite
inequality. We describe the chosen values and reduce the required
inequalities to integer comparisons. The reduction accounts explicitly
for the errors in the logarithm bounds.

We now specialize to
\begin{equation}\label{eq:parameters}
 d_0=\frac{13}{20},\quad m=26,\quad
 \theta=\frac7{80}\log2,\quad B=\log2.
\end{equation}
The code lengths are $w_{i,c}=N_{i,c}/D$ with $D=2^{55}$.
All 128 integer numerators are given in Appendix~\ref{app:code}.
The potential is
\begin{equation}\label{eq:potential}
 h(s)=\frac{\log2}{P}H(s),\qquad P=65536,
 \qquad H(s)\in\{0,\ldots,65535\}.
\end{equation}
Thus $0\le h(s)<\log2$. Section~\ref{sec:coverage} specifies the state
indexing, and Appendix~\ref{app:witness} identifies the electronic data.

\subsection{Exact reference numerators and adjoints}
For an integer vector $v\in\N^{64}$ define
\begin{align}
 (M_cv)_j&=\sum_{i:\,\sigma_c(i)=j}v_i,
 & (G_cv)_j&=13v_j+7(M_cv)_j,\label{eq:intforward}\\
 (T_cv)_i&=13v_i+7v_{\sigma_c(i)}.\label{eq:intadjoint}
\end{align}
The fibers of $\sigma_c$ are explicit:
\[
 (M_cv)_j=\begin{cases}
 v_{\lfloor j/2\rfloor}+v_{\lfloor j/2\rfloor+32},&j\bmod2=c,\\
 0,&j\bmod2\ne c.
 \end{cases}
\]
Starting with $v=\one$, process the bits of $s=s_1\cdots s_m$ from
$s_m$ to $s_1$, replacing $v$ by $G_{s_j}v$. The resulting vector
$q^{\mathrm{int}}$ satisfies
\begin{equation}\label{eq:refnumerator}
 q_s=\frac{q^{\mathrm{int}}}{64\cdot20^m},\qquad
 q^{\mathrm{int}}_i\ge13^m>0,\qquad
 \sum_iq^{\mathrm{int}}_i=64\cdot20^m.
\end{equation}
These identities follow inductively from \eqref{eq:intforward}, since
pushforward preserves total mass. For an action $c$, write
$t=G_cq^{\mathrm{int}}$ and
$v_j=q^{\mathrm{int}}_j+q^{\mathrm{int}}_{j+32}$ for $0\le j<32$.
The integer adjoint is
$T_s=T_{s_m}\cdots T_{s_1}=20^mA_s$.

\subsection{Eliminating every real logarithm from the check}
Let $S=2^{40}$. The certified logarithm functions used by the checker
return integers $\ell(a),u(a)$ satisfying
\begin{equation}\label{eq:logenclosure}
 \frac{\ell(a)}{S}\le\log a\le\frac{u(a)}{S}
 \qquad(0<a<2^{144}).
\end{equation}
Their complete finite description and soundness argument are given in
Appendix~\ref{app:logs}. We also use the exact constants
\begin{equation}\label{eq:logconstants}
 U_2=762123384787,\qquad L_{20}=3293842468468,
\end{equation}
for which
$(U_2-3)/S\le\log2\le U_2/S$ and $L_{20}/S\le\log20$.
Set $E=2^{112}$ and $Q=20DS$. For each state and action define
\begin{align}
 a_i={}&E+7N_{i,c}U_2\notag\\
 &+D\left(13u(t_i)+7u(t_{\sigma_c(i)})
       -13\ell(q_i^{\mathrm{int}})-7\ell(v_{i\bmod32})-20L_{20}\right).
                                                        \label{eq:aint}
\end{align}
Although this expression is written with integer subtraction, its value
is nonnegative. The exact arithmetic bounds give $0\le a_i<2^{113}$;
Appendix~\ref{app:implementation} explains their role in the computation.

\begin{lemma}[Integer majorant]\label{lem:intmajorant}
For every state, action, and coordinate,
$g_c(q_s)_i\le(a_i-E)/Q$.
\end{lemma}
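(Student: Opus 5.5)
The approach is to specialize \eqref{eq:g} at $d=d_0=13/20$ and $q=q_s$, and to rewrite every probability appearing in it through the integer vectors $q^{\mathrm{int}}$, $t=G_cq^{\mathrm{int}}$ and $v$. After multiplying by $Q=20DS$, each real logarithm and the constant $\log2$ can then be replaced by the appropriate one-sided integer enclosure from \eqref{eq:logenclosure} and \eqref{eq:logconstants}. The inequality $g_c(q_s)_i\le(a_i-E)/Q$ is exactly what remains after these substitutions.

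\emph{Step 1: scaled reference laws.} By \eqref{eq:refnumerator}, $q_s=q^{\mathrm{int}}/(64\cdot20^m)$. Since $F_c=\tfrac1{20}(13\,\mathrm{id}+7(\sigma_c)_{\#})$ on laws, and $(\sigma_c)_{\#}$ acts on numerators as $M_c$, we get
\[
F_cq_s=\frac{t}{64\cdot20^{m+1}},\qquad
\bigl((\sigma_c)_{\#}q_s\bigr)_j=\frac{(M_cq^{\mathrm{int}})_j}{64\cdot20^m}.
\]
\emph{Step 2: the index identity.} I check that $(M_cq^{\mathrm{int}})_{\sigma_c(i)}=v_{i\bmod32}$. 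This holds because $\sigma_c(i)=2i+c \bmod 64$ has parity $c$ and $\lfloor\sigma_c(i)/2\rfloor=i\bmod32$. Consequently the fiber formula gives $q^{\mathrm{int}}_{i\bmod32}+q^{\mathrm{int}}_{(i\bmod32)+32}=v_{i\bmod32}$. All these numerators are positive, since $v_j\ge q^{\mathrm{int}}_j\ge13^m$ and $t\ge13q^{\mathrm{int}}$.

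\emph{Step 3: collecting the normalizations.} Substituting into \eqref{eq:g}, the normalizing logarithms contribute
\[
-\tfrac{20}{20}\log(64\cdot20^{m+1})+\tfrac{20}{20}\log(64\cdot20^m)=-\log20.
\]
Hence
\[
20\,g_c(q_s)_i=\frac{7N_{i,c}}{D}\log2+13\log t_i+7\log t_{\sigma_c(i)}-13\log q^{\mathrm{int}}_i-7\log v_{i\bmod32}-20\log20.
\]
Multiplying by $DS$ gives
\[
Q\,g_c(q_s)_i=7N_{i,c}(S\log2)+D\bigl(S\cdot[\text{log terms}]\bigr).
\]

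\emph{Step 4: one-sided bounds.} Each term is bounded in the direction dictated by its sign:
\begin{itemize}[itemsep=3pt,topsep=5pt]
\item $S\log t\le u(t)$ for the two positive-coefficient logarithms;
\item $S\log q\ge\ell(q)$ and $S\log v\ge\ell(v)$ for the subtracted ones;
\item $S\log20\ge L_{20}$;
\item $S\log2\le U_2$, which is valid because $N_{i,c}\ge0$.
\end{itemize}
The last bullet needs its own check. The Kraft row \eqref{eq:kraft} forces $2^{-w_{i,c}}<1$, so $w_{i,c}>0$ and hence $N_{i,c}>0$; alternatively, one can read this off the listed numerators. The enclosures \eqref{eq:logenclosure} also require arguments below $2^{144}$. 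This holds because every argument is at most $\sum_jt_j=64\cdot20^{m+1}=64\cdot20^{27}<2^{123}$. Since $D>0$, these bounds combine to $Q\,g_c(q_s)_i\le a_i-E$ by \eqref{eq:aint}, and dividing by $Q>0$ gives the lemma.

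The main obstacle is bookkeeping rather than analysis. The pieces that must be verified carefully are the exact index identity $(M_cq^{\mathrm{int}})_{\sigma_c(i)}=v_{i\bmod32}$, the precise cancellation of the $64\cdot20^{m}$ versus $64\cdot20^{m+1}$ normalizations to a single $-\log20$, and the sign condition on $N_{i,c}$ that justifies using the upper constant $U_2$.
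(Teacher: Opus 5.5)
Your proposal is correct and follows essentially the same route as the paper. You substitute the integer numerators from \eqref{eq:refnumerator} into \eqref{eq:g}, collect the normalizations into a single $-20\log20$, bound each logarithm by the enclosure matching its sign (using $N_{i,c}\ge0$ to justify $U_2$), and check via the total mass $64\cdot20^{m+1}$ that every argument lies below $2^{144}$. Your explicit verification of $(M_cq^{\mathrm{int}})_{\sigma_c(i)}=v_{i\bmod32}$ and your Kraft-based argument that $w_{i,c}>0$ fill in details the paper leaves implicit.
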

\begin{proof}
Substitute \eqref{eq:refnumerator} into \eqref{eq:g}. The four
occurrences of the reference denominator cancel, and the extra factor
20 in $F_cq_s$ contributes $-\log20$. Hence
\begin{align*}
 20g_c(q_s)_i={}&7w_{i,c}\log2+13\log t_i+7\log t_{\sigma_c(i)}\\
 &-13\log q_i^{\mathrm{int}}-7\log v_{i\bmod32}-20\log20.
\end{align*}
Every logarithm argument is positive. Use upper enclosures for the
positive coefficients and lower enclosures for the negative coefficients.
The fixed $w_{i,c}$ are nonnegative, so their $\log2$ factor uses $U_2$.
Multiplication by $DS$ gives the claim. The argument bounds
$q^{\mathrm{int}},t,v<2^{144}$, for example by their total masses
$64\cdot20^{m+1}$ and $m\le28$; thus all logarithm enclosures apply.
\end{proof}

For current and successor potential values $H,H'$, put
\begin{equation}\label{eq:delta}
 \Delta=80(H'-H)-7P,\qquad
 b(H,H')=E-2^{37}(\Delta U_2+3\cdot2^{23}).
\end{equation}
The exhaustive integer predicate is
\begin{equation}\label{eq:integercheck}
 (T_sa)_i\le20^m b(H(s),H(s^+(s,c)))
 \qquad(s\in\bits^m,\ c\in\bits,\ i\in\U).
\end{equation}

\begin{lemma}[Acceptance implies the Bellman inequality]\label{lem:intsound}
If \eqref{eq:integercheck} holds, then \eqref{eq:bellman} holds with
the parameters \eqref{eq:parameters} and potential \eqref{eq:potential}.
\end{lemma}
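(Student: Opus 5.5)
The plan is to chain three facts: the coordinatewise majorant of Lemma~\ref{lem:intmajorant}, the monotonicity and constant preservation of $A_s$ in Lemma~\ref{lem:adjoint}, and the identity $T_s=20^mA_s$. These turn \eqref{eq:integercheck} into a bound on $(A_sg_c(q_s))_i$. What is then left is a purely scalar inequality involving the potential increment and the rational enclosure of $\log 2$.

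\emph{Step 1 (transport the majorant through $A_s$).} Fix $s$, $c$ and write $a\in\R^{64}$ for the vector \eqref{eq:aint}. Lemma~\ref{lem:intmajorant} gives $g_c(q_s)\le (a-E\one)/Q$ coordinatewise. By Lemma~\ref{lem:adjoint}, $A_s$ is order preserving and $A_s\one=\one$, so
\[
 (A_sg_c(q_s))_i\le \frac{(A_sa)_i-E}{Q}=\frac{20^{-m}(T_sa)_i-E}{Q}\le\frac{b(H,H')-E}{Q}.
\]
Here $H=H(s)$ and $H'=H(s^+(s,c))$, and the last step is \eqref{eq:integercheck} divided by $20^m>0$. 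Since $Q=20\cdot2^{55}\cdot2^{40}$, substituting \eqref{eq:delta} gives
\[
 \frac{b-E}{Q}=-\frac{\Delta U_2+3\cdot2^{23}}{20\cdot 2^{58}}.
\]

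\emph{Step 2 (reduce the Bellman inequality to a scalar one).} By \eqref{eq:potential}, $h(s^+)-h(s)=(\log2)(H'-H)/P$. It therefore suffices to show
\[
 -\frac{\Delta U_2+3\cdot2^{23}}{20\cdot2^{58}}+\frac{(H'-H)\log 2}{P}\le\frac{7}{80}\log2 .
\]
Multiply through by $80P=80\cdot2^{16}$. The potential and $\theta$ terms combine to $\bigl(80(H'-H)-7P\bigr)\log2=\Delta\log2$. The first term becomes $-(\Delta U_2+3\cdot2^{23})\cdot 2^{18}/2^{58}=-(\Delta U_2+3\cdot 2^{23})/S$. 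The target is thus
\[
 \Delta\Bigl(\log2-\frac{U_2}{S}\Bigr)\le \frac{3\cdot2^{23}}{S}.
\]

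\emph{Step 3 (sign split, which is the only real point).} If $\Delta\ge0$, the left side is nonpositive because $\log2\le U_2/S$. If $\Delta<0$, the left side equals $|\Delta|(U_2/S-\log2)$. The constant satisfies $(U_2-3)/S\le\log2$, so this is at most $3|\Delta|/S$. Since $0\le H,H'\le 65535$, we have $|\Delta|\le 80\cdot65535+7\cdot65536<2^{23}$, which gives the claim.

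I expect the main obstacle to be this last step. The slack $3\cdot 2^{23}$ built into $b$ is exactly what absorbs the error in the upper enclosure $U_2$ when $\Delta$ is negative, and one must check that the exponent bookkeeping ($2^{37}$, $D=2^{55}$, $S=2^{40}$, $P=2^{16}$) matches. The rest of the argument is routine.
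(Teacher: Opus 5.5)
Your proof is correct and follows the paper's proof: you push the coordinatewise majorant of Lemma~\ref{lem:intmajorant} through $A_s=20^{-m}T_s$ using positivity and $A_s\one=\one$, then use the scale identity to reduce everything to $\Delta\log2\le(\Delta U_2+3\cdot2^{23})/S$. You then split on the sign of $\Delta$, using $U_2/S$ as an upper bound on $\log 2$, $(U_2-3)/S$ as a lower bound, and $\Delta\ge-2^{23}$. The only difference is ordering: the paper proves this sign-uniform allowance first and then substitutes it, whereas you derive the scalar target first and verify it at the end.
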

\begin{proof}
Since $0\le H,H'<P$, one has $\Delta\ge-2^{23}$. Therefore
\begin{equation}\label{eq:signallowance}
 \Delta\log2\le\frac{\Delta U_2+3\cdot2^{23}}{S}.
\end{equation}
For $\Delta\ge0$ use the upper bound on $\log2$. For $\Delta<0$,
use the lower bound, multiply by the negative $\Delta$, and use
$-3\Delta\le3\cdot2^{23}$. Thus the uniform allowance is valid with
either sign of the potential increment.

By Lemmas~\ref{lem:adjoint} and~\ref{lem:intmajorant},
\[
 (A_sg_c(q_s))_i\le\frac{(T_sa)_i-20^mE}{20^mQ}
 \le-\frac{2^{37}}{Q}(\Delta U_2+3\cdot2^{23}).
\]
The exact scale identity
\[
 \frac{2^{37}}{Q}=\frac{1}{80PS}
\]
and \eqref{eq:signallowance} make this at most
$-\Delta\log2/(80P)$. Finally
\[
 \frac{\Delta\log2}{80P}
 =h(s^+(s,c))-h(s)-\frac7{80}\log2.
\]
Rearrangement proves the desired Bellman inequality.
\end{proof}

\section{The certificate and completion of the proof}\label{sec:coverage}

Encode the window by
$s=\sum_{j=1}^{26}2^{j-1}s_j\in\{0,\ldots,2^{26}-1\}$, and encode
an action by $r=2s+c$. Its successor window is
$s^+=(2s+c)\bmod2^{26}$. Thus the action range is exactly
$0\le r<2^{27}$. No symmetry reduction or restriction to typical inputs
is used in the coverage argument.

There are $2^{26}$ states, two choices of the prepended bit, and 64
coordinates. Thus the certificate consists of
$2^{26}\cdot2\cdot64=2^{33}$ scalar inequalities. The following proposition
is the computer-assisted step of the proof.

\begin{proposition}[Concrete finite witness]\label{prop:concrete}
The code table of Appendix~\ref{app:code} satisfies \eqref{eq:kraft}.
The potential $H$ supplied in the electronic supplement has values in
$\{0,\ldots,65535\}$, and
all inequalities \eqref{eq:integercheck} hold for that potential at $m=26$.
\end{proposition}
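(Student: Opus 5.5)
This proposition is a finite, purely arithmetic statement, so the proof is an exhaustive exact check organized to make the $2^{33}$ comparisons feasible and auditable. It has three parts: the Kraft inequalities for the 128 code numerators, the range of the supplied potential, and the integer predicate \eqref{eq:integercheck}.

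\emph{Kraft inequalities.} Since $w_{i,c}=N_{i,c}/D$ with $D=2^{55}$, condition \eqref{eq:kraft} says $2^{-N_{i,0}/D}+2^{-N_{i,1}/D}\le1$. I would avoid real exponentials by choosing, for each row, an integer $K$ with
\[
 2^{-N_{i,0}/D}\le K/2^{k}\quad\text{and}\quad 2^{-N_{i,1}/D}\le(2^k-K)/2^k .
\]
Each inequality becomes $(2^k/K)^{D}\le2^{N_{i,0}}$ and the analogous one for the second entry. Raising to the power $D$ is impractical, so I would instead compare logarithms using the certified enclosures of Appendix~\ref{app:logs} together with $U_2$. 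The condition $N_{i,0}\log2\ge D(k\log2-\log K)$ reduces to an integer inequality between $N_{i,0}U_2'$-type lower bounds and $D$ times upper bounds, with $(U_2-3)/S$ serving as the lower bound for $\log2$. This gives 128 cheap checks.

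\emph{Range of the potential.} This is immediate from the data format: the table stores $2^{26}$ sixteen-bit unsigned entries, so each value lies in $\{0,\ldots,65535\}$.

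\emph{The main check.} The plan is a depth-first traversal of the binary tree of windows, processing bits from $s_m$ down to $s_1$ as in the definition of $q^{\mathrm{int}}$. At depth $j$ the algorithm shares the forward vector $G_{s_j}\cdots G_{s_m}\one$ among all states with that suffix, so the forward part costs $O(2^{26})$ vector updates in total rather than $26\cdot2^{26}$.

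At each leaf $s$ and for each action $c$, the algorithm computes $t=G_cq^{\mathrm{int}}$ and $v$, then the integer vector $a$ of \eqref{eq:aint} using the integer log routines. The bound $0\le a_i<2^{113}$ is checked along the way.

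It then applies $T_s=T_{s_m}\cdots T_{s_1}$. Because $T_{s_1}$ acts first, this is a 26-step adjoint application per leaf, and the exact integers reach about $20^{26}\cdot2^{113}$. That size is fine for bignum arithmetic, but with $2^{27}$ leaf-actions it dominates the cost.

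Finally the algorithm compares each coordinate against $20^m b(H(s),H((2s+c)\bmod 2^{26}))$.

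\emph{Soundness and expected obstacle.} Soundness of the computed quantities rests only on the log enclosures \eqref{eq:logenclosure} and the stated identities \eqref{eq:refnumerator}. Lemmas~\ref{lem:intmajorant} and~\ref{lem:intsound} then turn acceptance into the Bellman inequality.

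The main obstacle is making this computation trustworthy, not merely fast. The checker itself must be verified: I would write it in Lean and prove that its acceptance implies \eqref{eq:integercheck} for every $r<2^{27}$. That proof needs a coverage lemma showing the traversal visits every $(s,c)$ and uses the correct successor index. It also needs the log-enclosure soundness, most naturally established by a range reduction by powers of two combined with a verified rational series bound with explicit remainder.

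On the efficiency side, I would first try computing $T_s a$ with machine-word modular fingerprints plus exact bignum confirmation only near the margin. If that proves fragile, I would fall back to fully exact 256-bit fixed-width arithmetic. Overflow-freedom for that fixed-width arithmetic would itself be proved from the bounds $a_i<2^{113}$ and $20^{26}<2^{113}$.
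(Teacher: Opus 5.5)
Your treatment of the range of $H$ and of the exhaustive predicate \eqref{eq:integercheck} follows the paper's route. That is an exact integer evaluation over all $2^{27}$ actions and 64 coordinates, by a checker whose acceptance is proved in Lean, with a coverage lemma and overflow bounds. Packing and batching versus your depth-first sharing is an implementation choice. One caution: residues modulo machine words cannot decide an inequality, or even locate its margin, so only your exact fallback counts as a proof.

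The genuine gap is the Kraft step. The rows of Table~\ref{tab:code} have essentially no slack. For $i=0$ one computes $2^{-w_{0,0}}\approx0.85502722844893$ and $2^{-w_{0,1}}\approx0.14497277155107$, so their sum agrees with $1$ to within about $10^{-14}$; the lengths were evidently rounded at scale $2^{-55}$. Your test $N_{i,0}(U_2-3)\ge D(kU_2-\ell(K))$ cannot resolve a gap this small.
\begin{itemize}
\item Divided by $DS$, its right side exceeds the true quantity $k\log2-\log K$ by at least $k(U_2/S-\log2)\approx k\cdot10^{-12}$ nats, because $U_2$ pins $\log2$ only at the $2^{-40}$ scale.
\item Placing $K/2^k$ inside so narrow a window forces $k\gtrsim47$. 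For such a large $K$, the enclosure $\ell(K)$ of Appendix~\ref{app:logs} also loses the mantissa truncation in \eqref{eq:normlower}, which can be as large as about $10^{-5}$.
\end{itemize}
Either loss dwarfs the available slack, which is at most about $10^{-14}$ for row $0$. So your integer test is false even though \eqref{eq:kraft} holds, and the 128 checks would reject the table. The $S=2^{40}$ enclosures suit the Bellman check, whose certificate was built around them. They do not suit an inequality that is tight near the $2^{-55}$ quantization.

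What is missing is an enclosure accurate far below $2^{-55}$. The paper (Appendix~\ref{app:kraft}) does this as follows:
\begin{itemize}
\item It takes a rational $\lambda_-\le\log2$ correct to $10^{-30}$.
\item It bounds $2^{-w}\le e^{-w\lambda_-}\le\bigl(E_{32}^+(-w\lambda_-/4)\bigr)^4$ using the degree-$32$ Taylor enclosure \eqref{eq:exptaylor}. The factor $1/4$ keeps the argument in $[-1,0]$.
\item It verifies each row by exact rational comparison.
\end{itemize}
Your dyadic-splitting reduction would also work if its logarithms were bounded to comparable precision, but not with the tools you named.
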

\begin{proof}[Computer-assisted proof]
The 64 Kraft inequalities follow from exact rational bounds on the
exponential function, as detailed in Appendix~\ref{app:kraft}. The values
of $H$ are integers between zero and $65535$ by their representation as
16-bit digits.

For this fixed $H$, the Lean certificate proves every instance of
\eqref{eq:integercheck}. The checks cover every state, both input bits,
and every coordinate. They use integer arithmetic and the logarithm
bounds established in Appendix~\ref{app:logs}. To make the computation
manageable, the implementation combines many scalar operations into
operations on larger integers. Appendix~\ref{app:implementation} explains
why this representation preserves each inequality and uses the same
potential at every occurrence of a state.

The resulting finite proof establishes precisely the inequalities stated
here. Appendix~\ref{sec:formal} identifies the formal theorems and their
verification record; Appendix~\ref{app:algorithm} gives the scalar
calculation in ordinary mathematical notation.
\end{proof}

\begin{proof}[Proof of Theorem~\ref{thm:main}]
Proposition~\ref{prop:concrete} and Lemma~\ref{lem:intsound} establish
the finite Bellman certificate. Equation~\eqref{eq:potential} gives
$0\le h\le\log2$. Apply Theorem~\ref{thm:generic} with
\eqref{eq:parameters}. The coefficient simplifies exactly:
\[
 \frac{1-d}{1-13/20}\frac{(7/80)\log2}{\log2}
 =\frac{1-d}{4}.
\]
The two finite-block correction terms become those in
\eqref{eq:main-info} and~\eqref{eq:main-code}.
Corollary~\ref{cor:operational} gives both the operational capacity
bound and the arbitrary-length-sequence conclusion.
\end{proof}

\section{Consequences, limitations, and further work}\label{sec:discussion}

The certificate produces a bound for every deterministic input word
before any input distribution is chosen. It consequently allows arbitrary
input dependence and nonstationary encoders. The only randomness needed
for the channel model is independent deletion. No infinite-output test
measure, stationary maximizing process, interchange of an extremum with
a limit, or assumption of rate convergence enters the finite-block proof.

At the endpoint $d_0=13/20$, the rate coefficient is $7/80$ bits per
input bit. The factor $1/(1-d_0)=20/7$ gives the normalized coefficient
$1/4$. The boundary correction is one bit for the potential and at most
$\log_2(n+1)$ bits for the output-length partition bound. The decoding
statement adds one further bit. These corrections are uniform in the
window length once the potential range is fixed.

The size of the witness is a practical limitation. There are $2^{26}$
potential values and $2^{33}$ scalar inequalities, even though the
predictive law itself has only 64 coordinates. The proof is finite and
fully specified, but its independent reproduction is computationally
substantial. A smaller potential representation or a sharper symbolic
bound could reduce the cost without changing the channel argument.

A natural next question is whether a compact analytical potential or a
richer predictive law can improve the coefficient. This paper establishes
exactly $1/4$; it makes no claim that this coefficient is optimal.

\section*{Data and code availability}
The electronic certificate, Lean proof, and reproduction instructions are
available at \url{https://github.com/factoreminv/bdc}.
Appendix~\ref{app:witness} identifies the exact potential used in the proof.

\appendix
\input{appendices.tex}

\end{document}

%% file: appendices.tex
\section{Exact code lengths}\label{app:code}

Table~\ref{tab:code} gives the complete integer code table. Its columns
are numerators with the common denominator
$D=36028797018963968=2^{55}$, not rounded decimal approximations to
the lengths. Row $i$ uses the context encoding \eqref{eq:context};
column $c$ is the symbol being prepended. The table is extracted from
\path{PackedCode.lean} and independently compared as rational numbers
with the entries \path{w_i_c} in \path{FixedCode.lean} by the manuscript
support script. Its extraction record is supplied in
\path{evidence/paper-data-checks.json}.

\input{code_table.tex}

\section{Rational proof of the Kraft inequalities}\label{app:kraft}

For rational $x$ and an integer $n\ge1$, define
\[
 P_n(x)=\sum_{j=0}^{n-1}\frac{x^j}{j!},\qquad
 R_n(x)=\frac{|x|^n(n+1)}{n!\,n},\qquad
 E_n^{\pm}(x)=P_n(x)\pm R_n(x).
\]
For $|x|\le1$, the exponential series yields
\begin{equation}\label{eq:exptaylor}
 E_n^-(x)\le e^x\le E_n^+(x).
\end{equation}
Indeed, after the first omitted term, every successive term has
absolute ratio at most $1/(n+1)$. The absolute tail is bounded by
\[
 \frac{|x|^n}{n!}\sum_{k=0}^{\infty}\frac1{(n+1)^k}
 =R_n(x).
\]

Set
\[
 \lambda_-=
 \frac{693147180559945309417232121458}{10^{30}},\qquad
 \lambda_+=
 \frac{693147180559945309417232121459}{10^{30}}.
\]
The exact rational comparisons $E_{32}^+(\lambda_-)\le2$ and
$2\le E_{32}^-(\lambda_+)$, followed by monotonicity of the
exponential, prove $\lambda_-\le\log2\le\lambda_+$.

Every code length in Table~\ref{tab:code} lies in $[0,3]$. For
$w=w_{i,c}$ let $x=-w\lambda_-/4$, so $|x|\le1$. Then
\begin{equation}\label{eq:kraftfactor}
 2^{-w}\le e^{-w\lambda_-}
 =\bigl(e^{-w\lambda_-/4}\bigr)^4
 \le\bigl(E_{32}^+(-w\lambda_-/4)\bigr)^4.
\end{equation}
The exponential enclosure is positive since it bounds $e^x>0$ from
above. The formal data supply rational $u_{i,c}$ satisfying
\[
 E_{32}^+(-w_{i,c}\lambda_-/4)\le u_{i,c},\qquad
 u_{i,0}^4+u_{i,1}^4\le1.
\]
Combining with \eqref{eq:kraftfactor} proves the Kraft row. These are
finite rational comparisons; the stored $u_{i,c}$ are conveniences for
the proof, not additional real-valued assumptions. They can be read
verbatim in \path{FixedCode.lean}. The paper support script independently
checks all 128 exponential upper comparisons and all 64 fourth-power
row comparisons using rational arithmetic. It also directly checks
\[
 \bigl(E_{32}^+(-w_{i,0}\lambda_-/4)\bigr)^4+
 \bigl(E_{32}^+(-w_{i,1}\lambda_-/4)\bigr)^4\le1
\]
for every row, so the printed code table and the stated Taylor formula
suffice to reproduce the Kraft calculation without the auxiliary
$u_{i,c}$. This auxiliary check is
reported as a data check, not as a replacement for the Lean proofs.
\lean{ExponentialBounds.lean; FixedCode.lean; PackedCode.lean}

\section{The logarithm enclosure routine}\label{app:logs}

This appendix specifies the finite logarithm data used in
\eqref{eq:aint}. It distinguishes the analytic enclosure, the integer
grid, and the optimized affine lookup.

\subsection{Positive centers and directed rounding}
For $x,c>0$ and $l\le\log c\le u$, the inequality
$\log t\le t-1$ gives
\begin{equation}\label{eq:center}
 l+\frac{x-c}{x}\le\log x\le u+\frac{x-c}{c}.
\end{equation}
For the lower inequality apply the elementary bound to $c/x$ and
negate; for the upper one apply it to $x/c$. Thus the residual can
have either sign without changing the formulas.

The base table uses the 129 rational centers $c_j=1+j/128$,
$0\le j\le128$, and integer endpoints $L_j,U_j$ such that
$L_j/S\le\log c_j\le U_j/S$. Each endpoint is connected to a
rational exponential comparison using \eqref{eq:exptaylor}, and those
comparisons are proved in the formal base table. A logarithm enclosure
is not assumed merely because an external program computed it.

For any $a\ge1$, exponent $k\ge0$, and center $j$, put
$\rho=128a-(128+j)2^k$. Equation~\eqref{eq:center}, applied at
$x=a/2^k$, gives the integer bounds
\begin{align}
 \ell_0(a;k,j)&=k(U_2-3)+L_j+
                  \left\lfloor\frac{S\rho}{128a}\right\rfloor,
                                                        \label{eq:base-lower}\\
 u_0(a;k,j)&=kU_2+U_j+
                  \left\lceil\frac{S\rho}{(128+j)2^k}\right\rceil.
                                                        \label{eq:base-upper}
\end{align}
They satisfy $\ell_0/S\le\log a\le u_0/S$. The floor and ceiling
here are mathematical signed floor and ceiling, including for negative
$\rho$. The denominators are strictly positive. In particular, a
center-selection algorithm need not be accurate for soundness: every
center and exponent obey these inequalities. Selection affects only
the width of the enclosure.

\subsection{Normalized grid}
For an integer $0<a<2^{144}$ choose the unique integer $K\ge0$ with
$2^K\le a<2^{K+1}$ and let
\[
 M=\left\lfloor\frac{a\,65536}{2^K}\right\rfloor.
\]
Then $65536\le M<131072$ and
\begin{equation}\label{eq:normalizer}
 2^K\frac{M}{65536}\le a
 <2^K\frac{M+1}{65536}.
\end{equation}
The normalized grid has 65,537 endpoints, including $131072/65536=2$.
For its point $x_j=(65536+j)/65536$, let $L_j^{\rm f},U_j^{\rm f}$
be the lower and upper integer bounds obtained from
\eqref{eq:base-lower}--\eqref{eq:base-upper} at numerator $65536+j$,
exponent 16, and center index $\min(\lfloor j/512\rfloor,128)$,
subtracting respectively $16U_2$ and $16(U_2-3)$.
The formal implementation certifies its stored fine-grid lookup data
against this specification. By monotonicity of the logarithm,
\begin{align}
 \ell_{\rm n}(M,K)&=K(U_2-3)+L_{M-65536}^{\rm f},\label{eq:normlower}\\
 u_{\rm n}(M,K)&=KU_2+U_{M+1-65536}^{\rm f}\label{eq:normupper}
\end{align}
are lower and upper integer enclosures for $S\log a$.
The upper endpoint uses $M+1$, not $M$; dropping that distinction
would invalidate the bound for truncated mantissas.

\subsection{Thirty-two affine pieces}
Table~\ref{tab:logs} gives two arrays of packed integers $C_j,V_j$,
$0\le j<32$. Put
\[
 j=\left\lfloor\frac{M-65536}{2048}\right\rfloor,\qquad t=M\bmod2048,
 \qquad Z_2=14058695031755056497480129050576.
\]
Define the packed pair
\begin{equation}\label{eq:logpair}
 P_{M,K}=C_j+tV_j+2^{64}\left\lfloor V_j/2^{64}\right\rfloor+KZ_2
\end{equation}
and the integer endpoints
\begin{align}
 \ell(a)&=(P_{M,K}\bmod2^{64})-64,\label{eq:affinelower}\\
 u(a)&=\left(\left\lfloor P_{M,K}/2^{64}\right\rfloor\bmod2^{64}\right)-64.
                                                        \label{eq:affineupper}
\end{align}
These equations and Table~\ref{tab:logs} fully specify the scalar
enclosures used by the optimized checker. The bias of 64 keeps the
packed fields nonnegative. Its removal in the raw-gradient expression
is exact because $13+7-13-7=0$.

For all 65,536 normalized mantissas, the formal finite checks establish
that the affine lower value at $K=0$ is no larger than
$\ell_{\rm n}(M,0)$ and the affine upper value is no smaller than
$u_{\rm n}(M,0)$. They also bound each initial packed field below
$2^{40}$. The exponent extension adds the two appropriate integer
endpoints for $\log2$ without a carry between the fields for $K\le255$.
Equations~\eqref{eq:normalizer}--\eqref{eq:normupper} then prove
\eqref{eq:logenclosure}. Our argument range gives $K\le143$, within
the proved extension range. The comparison is finite over mantissas
but universal over all positive integer arguments in the stated range.
\lean{LogArithmetic.lean; FixedLogTable.lean; FastLogTable.lean; FastLog.lean}
\lean{FineLogSpecification.lean; FineLogTable.lean; NormalizedLog.lean}
\lean{LinearLogData.lean; LinearLogTable.lean; BatchLogSound.lean}

\input{log_table.tex}

\section{A scalar specification of the exhaustive check}\label{app:algorithm}

The following procedure is a readable specification of
\eqref{eq:integercheck}. Its variables are mathematical integers of
unbounded size; logarithm endpoints are the integer functions of
Appendix~\ref{app:logs}. It is not the packed implementation used to
accelerate the Lean proof.

\begin{enumerate}
\item Read the exact potential $H[0],\ldots,H[2^{26}-1]$ and the code
numerators $N_{i,c}$. Require $0\le H[s]<65536$.
\item For each integer $s$ from 0 through $2^{26}-1$, define
$s_j=\lfloor s/2^{j-1}\rfloor\bmod2$ for $1\le j\le26$.
Start $q_i^{\mathrm{int}}=1$ and apply $G_{s_j}$ for $j=26,25,\ldots,1$.
\item For each $c\in\{0,1\}$, form $t=G_cq^{\mathrm{int}}$ and
$v_j=q_j^{\mathrm{int}}+q_{j+32}^{\mathrm{int}}$.
Use \eqref{eq:aint} to obtain $a$.
\item Set $z=a$. For $j=1,2,\ldots,26$, replace $z$ simultaneously by
$T_{s_j}z$. ``Simultaneously'' means that all new coordinates use the
old vector; an in-place coordinate loop without a copy is not this map.
\item Set $s'=(2s+c)\bmod2^{26}$, compute $b(H[s],H[s'])$ from
\eqref{eq:delta}, and require $z_i\le20^{26}b(H[s],H[s'])$ for all
$i=0,\ldots,63$. Reject if any comparison fails.
\end{enumerate}

The grouped proof establishes acceptance of this entire predicate
through representation lemmas. It does not sample states. One full
scalar implementation could independently check the same witness, but
the manuscript support script does not claim to have rerun this
$2^{33}$-inequality procedure. Its purpose is to reproduce the paper's
tables and evidence snapshot; the supplied Lean development is the
proof-producing verifier.

\section{Exact arithmetic and storage}\label{app:implementation}

\subsection{Packed arithmetic}
The production proof evaluates batches of 64 actions, with 64 coordinates
per action. Each coordinate occupies a lane of width 256 in a natural
number. For lane values $z_k$, define
$\operatorname{pack}(z)=\sum_k z_k2^{256k}$. Addition and multiplication
by a scalar have their coordinatewise interpretation whenever the
resulting lane values are below $2^{256}$; this follows by uniqueness
of radix expansion. Natural subtraction agrees with coordinatewise
subtraction when each subtracted lane is no larger than its corresponding
minuend. The implementation proves these preconditions at each operation.

For the final comparison the sharper bounds are
\[
 0\le (T_sa)_i<20^m2^{113}<2^{255},\qquad
 0\le20^mb(H,H')<2^{255}\quad(m\le28).
\]
The first follows by induction from
$T_c a\le20\max_i a_i$ and the certified initial bound. For the second,
write $b=b_0+b_1H-b_1H'$ with
\[
 b_0=E+2^{37}(7PU_2-3\cdot2^{23}),\qquad b_1=2^{37}80U_2.
\]
The exact integer inequalities $b_1P\le b_0$ and
$b_0+b_1P<2^{113}$ prove nonnegativity and the required upper bound,
without allowing a truncated subtraction.

For $0\le A,B<2^{255}$, the integer $B+2^{255}-A$ lies strictly
between zero and $2^{256}$; its bit at position 255 is one exactly when
$A\le B$. Applying this fact lane by lane proves that the packed
comparison tests every scalar inequality in \eqref{eq:integercheck}.
It is not an average test, a floating-point tolerance, or a machine-word
overflow heuristic. The kernel works with natural numbers, while the
lane lemmas justify their intended vector interpretation.
\lean{RawBatchGradient.lean; RawBatchBounds.lean; RawBatchSound.lean}
\lean{PackedArrayComparison.lean; PackedMatrixArithmetic.lean}

\subsection{A single potential across all batches}
Store the potential as chunks
\[
 D_b=\sum_{j=0}^{31}H(32b+j)2^{16j},\qquad 0\le b<2^{21}.
\]
Each $D_b<2^{512}$. For a batch $r=64b+k$, $0\le k<64$,
\begin{align*}
 s&=32b+\lfloor k/2\rfloor,\\
 H(s)&=\left\lfloor D_b/2^{16\lfloor k/2\rfloor}\right\rfloor\bmod65536,\\
 H(s^+)&=\left\lfloor
  \bigl(D_{2b\bmod2^{21}}+2^{512}D_{(2b+1)\bmod2^{21}}\bigr)/2^{16k}
                      \right\rfloor\bmod65536.
\end{align*}
The last formula follows by writing the successor index $64b+k$ in
chunks of 32 and then reducing it modulo $2^{26}$. For $k<32$ it reads
the first chunk, and for $k\ge32$ it reads the second. These identities
prove that current and successor values are evaluations of one global
potential, even where a batch or group boundary is crossed.

\subsection{Finite coverage and the concrete proposition}
There are $2^{21}$ batches, divided into $2^{10}=1024$ groups of
$2^{11}=2048$ batches. Every action has a unique decomposition
\[
 r=64(2048g+b)+k,\quad
 0\le g<1024,\quad0\le b<2048,\quad0\le k<64.
\]
Existence and uniqueness follow from successive Euclidean divisions.
Every action then has 64 coordinate obligations, for a total of
$2^{27}\cdot64=2^{33}=8,589,934,592$ scalar inequalities.

\subsection{Identification of the electronic certificate}\label{app:witness}
The source potential file, relative to \path{upper_formal}, is
\begin{center}\small\path{results/small-windows/quantized-m26-howard.npy}.\end{center}
Its SHA-256 is
\begin{center}\footnotesize
\texttt{5e6818890131d589388c063634c3c975ef9c28c59962a932eaa1e38ec0ac3282}.
\end{center}
The array contains $2^{26}$ unsigned 16-bit values, a 128 MiB payload
before its file header. The generated Lean data encode these values in
chunks and lookup trees. The supplementary evidence snapshot identifies
the source files and available round-trip checks. A hash identifies the
byte sequence; it does not prove the inequalities. Logically, the Lean
theorem is about the literal data it imports. Reproducing those data
from the array is an additional provenance check, distinct from kernel
verification of the imported literal witness.

The numerical search that proposed this witness is outside the proof
dependencies. Neither convergence of a policy iteration nor accuracy of
a floating-point eigenvalue calculation is assumed. A proposed potential
is acceptable exactly when the integer checker and its proved soundness
implication establish the required inequalities.

\section{Formal correspondence and verification record}\label{sec:formal}

\subsection{The five closed declarations}
The final file is \path{upper_formal/BDCFormal/Howard26Upper.lean}.
Its declarations are:
\begin{enumerate}
\item \path{BDCFormal.upper_capacity_bound}: the operational capacity
bound for all $13/20\le d<1$;
\item \path{BDCFormal.upper_finite_block_bound}: \eqref{eq:main-code}
for every positive length and every block code;
\item \path{BDCFormal.upper_information_bound}: \eqref{eq:main-info}
for every law on the entire length-$n$ input alphabet;
\item \path{BDCFormal.upper_sequence_bound}: the limit-rate converse
for arbitrary input lengths tending to infinity;
\item \path{BDCFormal.upper_eventual_rate_bound}:
\eqref{eq:main-eventual}, without a rate-convergence hypothesis.
\end{enumerate}
The final capacity declaration has the following form:
\begin{Verbatim}[fontsize=\small,commandchars=\\\{\}]
theorem upper_capacity_bound :
    \ensuremath{\forall} d (hd : 13/20 \ensuremath{\le} d) (hd1 : d < 1),
    operationalCapacity d (by linarith) hd1.le \ensuremath{\le} (1-d)/4 := by
  intro d hd hd1
  exact RawBatch.capacity_of_groups
    Howard26.groups Howard26.exhaustive d hd hd1
\end{Verbatim}
Its hypotheses are the deletion-parameter inequalities. The finite
certificate is supplied by a proved term, not left as a hypothesis of
the final theorem. The full body of the actual declaration, rather than
this typeset excerpt, is the formal artifact. Appendix~\ref{app:map}
maps the main mathematical steps to source modules.

\subsection{Verification record}
Verification completed on September 11, 2026. All 1,024 certificate groups
compiled, and the analytic proof and 256 certificate components passed
fresh replay. The final theorems were then checked over those verified
imports. This was a modular verification, with each component replayed
separately. The completed report and logs are preserved in
\path{upper_formal/evidence/completed-audit/}.

The audit checks the theorem statements, their dependencies, the fixed
channel definitions, and the identity of the source and compiled files.
It also tests that deliberately introduced unproved assumptions are
rejected. Commands for reproducing these checks are given in
Appendix~\ref{app:reproduction}; the detailed counts and results are in
\path{verification.json} in the directory above.

\subsection{Foundations and limits of the claim}
The development uses Lean~4~\cite{Lean4} and the Mathlib
library~\cite{Mathlib}. It is pinned to Lean \texttt{4.34.0-rc2}
and Mathlib commit
\begin{center}\small
\texttt{d6893048e0d784c43f3cf098b61299b3a4b4aed0}.
\end{center}
The audit permits the usual foundational dependencies
\mbox{\path{propext}}, \mbox{\path{Classical.choice}}, and \mbox{\path{Quot.sound}}; it is
designed to reject additional axioms and admission-based proofs.
Large finite predicates use kernel evaluation and proved representation
identities. External numerical optimization is not accepted as a proof
of a real inequality. The completed audit checked this policy over
the final dependency closure, including deliberate negative controls.

Formal verification establishes propositions about the specified
mathematical definitions, subject to the proof assistant's foundations
and implementation. Matching those definitions to the intended channel
is a separate semantic responsibility; the explicit model
\eqref{eq:channel}, the code definition, and the finite-block theorem
make that responsibility inspectable. The formal guarantee is relative to these explicit definitions
and the stated foundations.

\section{Correspondence between mathematics and Lean}\label{app:map}

All paths in this appendix are relative to
\path{upper_formal/BDCFormal}. A module name identifies where the
relevant definitions and proofs are developed; the exact final theorem
names are in Appendix~\ref{sec:formal}.

\paragraph{Finite probability and the actual channel.}
\path{FiniteProbability.lean}, \path{FiniteEntropy.lean}, and
\path{DeletionChannel.lean} define finite laws, entropy, retention
masks, and the deletion channel. \path{Operational.lean} defines
messages, encoders, decoders, average error, achievable rates, and
operational capacity. These are the definitions summarized in
Section~\ref{sec:model}.

\paragraph{Entropy and divergence.}
\path{LogSum.lean}, \path{DivergenceMixture.lean}, and
\path{EntropyDataProcessing.lean} supply the support-sensitive
log-sum, mixture, and information-contraction arguments of
Section~\ref{sec:entropy}. \path{DecoderConverse.lean} connects
uniform-message decoding error to divergence bounds.

\paragraph{Backward cost and predictive recursion.}
\path{BackwardCode.lean} builds the finite positive test laws and their
partition bounds. \path{DeletionRecursion.lean} and
\path{RewardRecursion.lean} relate prepending an actual input bit to
the predictive filter and score increment. \path{FilterTangent.lean}
proves the exact relative-entropy slack and global majorant.

\paragraph{Window and word certificates.}
\path{WindowReachability.lean} proves reachability with zero padding.
\path{FiniteWindowConverse.lean} defines
\path{FiniteWindowCertificate}, its coordinate interpretation, and
the reduction to a word certificate.
\path{WordPotentialConverse.lean} telescopes that certificate.

\paragraph{Thinning and operational consequences.}
\path{ChannelThinning.lean} proves the actual channel composition.
\path{ThinningConverse.lean}, \path{FiniteBlockUpper.lean}, and
\path{InformationBlockUpper.lean} establish the finite output-law,
decoding, and information conclusions. \path{OperationalConverse.lean},
\path{SequenceConverse.lean}, and
\path{SequenceUniformConverse.lean} provide the limiting results.

\paragraph{Exact real-to-integer bridge.}
\path{PackedCode.lean} connects the packed numerators to the rational
code table through the identity \path{PackedCode.rational_eq}. \path{IntegerFilter.lean}, \path{RawGradient.lean}, and
\path{RawBellman.lean} connect integer references, logarithm
enclosures, and the signed potential correction to the real Bellman
inequality. \path{RawBatchSound.lean} assembles the packed arithmetic
lemmas. \path{RawBatchCapacity.lean} and
\path{FiniteBlockGroups.lean} connect accepted global groups to the
information-theoretic conclusions.

\paragraph{Global data and optimized evaluators.}
\path{ChunkPotential.lean} and \path{ChunkGroups.lean} prove the
potential indexing and group coverage. The optimized evaluators are
connected to the original checker in
\path{FastCheck64Sound.lean}, \path{LiteralCheck64Sound.lean}, and
their metadata and lookup support modules. The exhaustive data and
group proofs are under \path{Howard26Data/} and
\path{Howard26Checks/}. Their final assembly is
\path{Howard26Upper.lean}.

\section{Reproduction and document provenance}\label{app:reproduction}

In the computational release, \path{upper_formal} contains the formal
project and \path{quarter_bound_paper} contains the manuscript sources,
exact tables, evidence, and building utilities. The release includes every
local proof dependency and the concrete literal data; pinned external Lean
and Mathlib dependencies are obtained during setup. The root
\path{README.md} distinguishes the completed audit evidence from the
working directory of a fresh reproduction.

After installing the pinned Lean toolchain, initialize dependencies from
\path{upper_formal} with
\begin{Verbatim}[fontsize=\small]
lake exe cache get
\end{Verbatim}
From the formal project directory, the assembly build is
\begin{Verbatim}[fontsize=\small]
lake build BDCFormal.Howard26Upper
\end{Verbatim}
To reproduce the explicit group-check registration and then the strict
audit in an independent checkout, use
\begin{Verbatim}[fontsize=\small]
python3 scripts/run_howard_certificate.py --phase checks \
  --workers 4 --min-free-gib 3
python3 scripts/verify_upper.py --modular-fresh --require-upper
\end{Verbatim}
The toolchain and Mathlib revision must match the pinned versions.
These commands are resource-intensive. A fresh checkout has no compiled
certificate cache, so compilation and independent replay must both run. The manuscript
build does not run them or alter the formal source tree.

To rebuild this paper from its recorded snapshot, run
\begin{Verbatim}[fontsize=\small]
python3 build_paper.py
\end{Verbatim}
To deliberately capture a new snapshot and regenerate the exact tables,
run \path{prepare_paper.py} before building. The snapshot routine
checks the expected code-table and Kraft relationships, records file
hashes, and requires a completed verification report. The build
script runs LaTeX sufficiently many times to resolve cross-references
and rejects unresolved references or overfull boxes.

\paragraph{Scope of the source comparison.}
The paper is a mathematical exposition of the implemented chain, not a
machine-checked translation of the prose itself. Its elementary
log-sum and series arguments are written out for a human reader;
the Lean implementation may use equivalent finite-law identities or
library lemmas. The finite-block theorem, operational definitions,
coefficient, state encoding, and integer predicate are matched
explicitly. The new manuscript makes no change to the proof sources
covered by the completed independent replay.

%% file: code_table.tex
\begingroup\small
\begin{longtable}{r@{\qquad}r@{\qquad}r}
\caption{Exact code numerators; $w_{i,c}=N_{i,c}/2^{55}$.}\label{tab:code}\\
\toprule $i$ & $N_{i,0}$ & $N_{i,1}$ \\ \midrule
\endfirsthead
\multicolumn{3}{c}{Table \thetable\ continued}\\
\toprule $i$ & $N_{i,0}$ & $N_{i,1}$ \\ \midrule
\endhead
\midrule\multicolumn{3}{r}{Continued on next page}\\\endfoot
\bottomrule\endlastfoot
0 & 8140985237798599 & 100381493485016752 \\
1 & 66873931902426304 & 16802802218158854 \\
2 & 19879035082725536 & 59583145120762928 \\
3 & 82703681629426432 & 11839583596344262 \\
4 & 16253725455866406 & 68340830244727080 \\
5 & 39691610129809872 & 32607190098165324 \\
6 & 13637566269366960 & 76217273187066688 \\
7 & 85546870989978176 & 11136585844599798 \\
8 & 14432245113150018 & 73652856759482000 \\
9 & 30314556834137416 & 42449717367346168 \\
10 & 30330665508747488 & 42429379754296720 \\
11 & 64538358005372064 & 17722545547055206 \\
12 & 15011650986123108 & 71882938506889416 \\
13 & 53177359053108936 & 23156126773920168 \\
14 & 12102715404927432 & 81687641146737856 \\
15 & 91429160876521520 & 9824405767541486 \\
16 & 12255517331836568 & 81108917930130112 \\
17 & 26180352673880852 & 48189955664327472 \\
18 & 22362029444810328 & 54622984962053256 \\
19 & 46365696235847392 & 27410516726943052 \\
20 & 21709748934760232 & 55858540885961120 \\
21 & 35746989920010232 & 36312140296164672 \\
22 & 40492742859816456 & 31918081389777512 \\
23 & 73858156317735360 & 14366713149911624 \\
24 & 14701452053380894 & 72820548832190976 \\
25 & 51128121200576008 & 24342698114409384 \\
26 & 40211621795473736 & 32157647421268008 \\
27 & 56657164459430400 & 21300509752417452 \\
28 & 16141801286849876 & 68646935688668416 \\
29 & 60542399330254304 & 19438124008426808 \\
30 & 21275323130904828 & 56706926374105728 \\
31 & 90541051759959888 & 10011100686904910 \\
32 & 10179901224578372 & 89753600394574384 \\
33 & 56473662083282152 & 21393702844091528 \\
34 & 19177520510604684 & 61121727051520440 \\
35 & 68067766045868232 & 16354331693356560 \\
36 & 20143472611901056 & 59020023253528424 \\
37 & 29911154338323744 & 42963717108079136 \\
38 & 23657653026803248 & 52295606860683904 \\
39 & 70552891808698168 & 15464707175748434 \\
40 & 13884656710800832 & 75402036593447184 \\
41 & 29606341183199600 & 43358185484296368 \\
42 & 34161217403479312 & 37965987263737208 \\
43 & 50572492398644336 & 24677437257282220 \\
44 & 25009246527530016 & 50030991897680784 \\
45 & 52116956249653944 & 23760906523063720 \\
46 & 46524085925533016 & 27300810438840792 \\
47 & 80672814765149840 & 12372093880466334 \\
48 & 9873133277626622 & 91195587395410656 \\
49 & 80814422296251424 & 12334104181822422 \\
50 & 23452645434471648 & 52653206695368352 \\
51 & 69943199371999512 & 15677655158896560 \\
52 & 18700356812023332 & 62207262962683520 \\
53 & 40162817189987136 & 32199483061127640 \\
54 & 41494684372764712 & 31083385827777204 \\
55 & 70910997599424808 & 15341190978643718 \\
56 & 11541016200305046 & 83887581053610736 \\
57 & 74495464092136672 & 14165441842135428 \\
58 & 35465015548196632 & 36598760616713992 \\
59 & 65828165595572704 & 17207511417398896 \\
60 & 12589933945592770 & 79870137884132624 \\
61 & 58802627644078896 & 20246693290618700 \\
62 & 17819301423540656 & 64300980547725592 \\
63 & 100379572983432992 & 8141310872402782 \\
\end{longtable}
\endgroup

%% file: log_table.tex
\begingroup\footnotesize
\begin{longtable}{r@{\quad}r@{\quad}r}
\caption{Exact packed affine logarithm coefficients for \eqref{eq:logpair}.}\label{tab:logs}\\
\toprule $j$ & $C_j$ & $V_j$ \\ \midrule
\endfirsthead
\multicolumn{3}{c}{Table \thetable\ continued}\\
\toprule $j$ & $C_j$ & $V_j$ \\ \midrule
\endhead
\midrule\multicolumn{3}{r}{Continued on next page}\\\endfoot
\bottomrule\endlastfoot
0 & 3029110255766351240742567951 & 304723702261735822664128779 \\
1 & 626973476636423485346604672103 & 295627465185037053085388339 \\
2 & 1232299879142570971276078181558 & 287058565181173418458707781 \\
3 & 1820087069255880127219887041878 & 278972398022974689088292508 \\
4 & 2391322622363278445674133931514 & 271329321657426463147717584 \\
5 & 2946913039194004356790734393508 & 264093881442419070878775134 \\
6 & 3487692358087906409303199369574 & 257234293637913510724029067 \\
7 & 4014429617625758289725741863820 & 250722039577571827716679498 \\
8 & 4527835532524385841913716677158 & 244531367606667123322669097 \\
9 & 5028568172134788765222313786618 & 238639034827666523506960442 \\
10 & 5517238032226849728358099019438 & 233024011952325999380710151 \\
11 & 5994412608490592564188146453934 & 227667132813552966719788163 \\
12 & 6460620208205356964418398600853 & 222551020578429991126571262 \\
13 & 6916353572940526883191379805296 & 217659792600309608677117211 \\
14 & 7362072970654112404502366943658 & 212978931291605809954303432 \\
15 & 7798208953905263875472675863143 & 208495173443329597790517298 \\
16 & 8225164955827671911347441315920 & 204196288864160102753036743 \\
17 & 8643319380034990482345557886505 & 200071117273932730563133477 \\
18 & 9053027779587140388379336067363 & 196109310049222130162350366 \\
19 & 9454624546638279434554592189628 & 192301367116830341131604653 \\
20 & 9848424624276205727343328918902 & 188638489379834204014775029 \\
21 & 10234724977023008254414351955500 & 185112523377353139190046782 \\
22 & 10613805873676990002616618146604 & 181715961284549146869911790 \\
23 & 10985932185668302841952316483070 & 178441811785418291134307211 \\
24 & 11351354505817082107913754575952 & 175283544732558478801960628 \\
25 & 11710310079451300466046686077986 & 172235128040657606849493280 \\
26 & 12063023765334560199628273506111 & 169290953899517267573213600 \\
27 & 12409708959368355956599175925648 & 166445728093588306331807508 \\
28 & 12750568261755467663416902264297 & 163694543788947116384544885 \\
29 & 13085794196810361025276224982142 & 161032844639807491472176328 \\
30 & 13415569934282221040373776930393 & 158456332554800257269175077 \\
31 & 13740069885295466045168825898173 & 155960949250229197674185394 \\
\end{longtable}
\endgroup